\documentclass[showpacs,amsmath,amsfonts,amssymb,aps,superscriptaddress]{revtex4}
\usepackage{amsmath,amssymb}
\usepackage[utf8]{inputenc}
\usepackage[T1]{fontenc}
\usepackage{microtype}
\usepackage{caption}
\usepackage{parskip}
\usepackage{physics}
\usepackage{xcolor}
\newtheorem{theorem}{Theorem}[section]
\newtheorem{lemma}[theorem]{Lemma}

\newenvironment{proof}[1][Proof.]{\begin{trivlist}
		\item[\hskip \labelsep {\bfseries #1}]}{\end{trivlist}}

\usepackage{mathtools}
\usepackage{listings}
\usepackage{booktabs}
\usepackage{float}
\usepackage{graphicx}
\usepackage{subcaption}

\usepackage{natbib}
\begin{document}

    \title{An exact Kerr-like rotating Morris--Thorne wormhole: throat, ergoregion and equatorial shadow slice}
    
    \author{Mark Essa Sukaiti}
    \email{100064482@ku.ac.ae}
    \affiliation{Mathematics Department, Khalifa University of Science and Technology, PO Box 127788, Abu Dhabi, United Arab Emirates}
    
    \author{Davide Batic}
    \email{davide.batic@ku.ac.ae}
    \affiliation{Mathematics Department, Khalifa University of Science and Technology, PO Box 127788, Abu Dhabi, United Arab Emirates}
    
    \author{Denys Dutykh}
    \email{denys.dutykh@ku.ac.ae}
    \affiliation{Mathematics Department, Khalifa University of Science and Technology, PO Box 127788, Abu Dhabi, United Arab Emirates}
    
    \date{\today}

\begin{abstract}
We construct an exact Kerr-like rotating extension of the zero-redshift Morris--Thorne wormhole within a Teo-type stationary and axisymmetric ansatz. Imposing the Einstein equations for an anisotropic source comoving with zero angular momentum observers fixes the circumferential metric function to $\mathcal{K}(r)=r^2+a^2$ and reduces the frame-dragging function to a single radial quadrature. For $b(r)=r_0^2/r$, this quadrature is evaluated in terms of incomplete elliptic integrals and normalised by the ADM angular momentum. The spacetime is asymptotically flat but has vanishing ADM mass, so the Kerr relation $a=J_{\rm ADM}/M_{\rm ADM}$ is not applicable. The field equations leave the Kerr-like oblateness length $a$ and the physical angular momentum $J_{\rm ADM}$ independent. The exact family is therefore labelled by $(r_0,a,J_{\rm ADM})$. The relation $J_{\rm ADM}=ar_0$ is used only to select a one-parameter slice for the numerical ergoregion and equatorial-capture illustrations, and not as a field-equation constraint. We obtain $r_{\rm th}=\sqrt{r_0^2-a^2}$, the reality bound $|a|\leq r_0$, and the regular range $0\leq |a|<r_0$. Within the canonical stationary foliation, the throat is characterised quasi-locally as the unique member $S_0$ of the closed two-surface family $S_\ell$ with vanishing mean curvature and positive area second variation. $r=r_{\rm th}$ and $\ell=0$ are only coordinate representations of this surface. A signed throat-adapted coordinate displays two asymptotically flat ends, excludes closed timelike curves, and shows that the throat is timelike rather than a horizon. These geometrical and causal conclusions do not require $J_{\rm ADM}=ar_0$. The supporting anisotropic stress tensor is reconstructed from the Einstein tensor and is interpreted here as an effective phenomenological source. No microscopic matter Lagrangian or realistic equation of state is claimed. We also determine the ergoregion onset. Because the full Hamilton--Jacobi equation is not separable, the optical result is only the one-dimensional equatorial capture interval, not the complete two-dimensional shadow.
\end{abstract}

\maketitle

\section{Introduction}

Wormholes provide a useful setting for exploring the relation between spacetime topology, the gravitational field equations, and the matter content required to support nontrivial geometries. In their simplest form, they describe geometries with a throat connecting two asymptotic regions or two distant domains of the same spacetime. Since the Morris--Thorne construction and the earlier Ellis--Bronnikov solutions, static wormholes have served as standard laboratories for studying the geometrical requirements of traversability, the role of energy conditions, and the possible signatures of horizonless compact objects \cite{Morris1988AJP, Ellis1973JMP, Ellis1974JMP, Ellis1979GRG, Bronnikov1973APP, Visser1996, Lobo2008CQGR}. They have also been discussed in modified theories of gravity and in effective or quantum-inspired settings, where the stress-energy responsible for the throat may be partly geometrical or effective rather than purely material \cite{Bhawal1992PRD, Kanti2011PRL, Kanti2012PRD, Antoniou2020PRD, Gao2017JHEP}. From this point of view, exact solutions remain valuable because they make it possible to distinguish coordinate effects from invariant geometrical properties and to trace explicitly how the matter source enters the Einstein equations. For static, spherically symmetric wormholes in general relativity, the geometry is usually written in terms of a redshift function $\Phi(r)$ and a shape function $b(r)$. The existence of a throat imposes a flare-out condition, and in ordinary Einstein gravity, this is typically associated with a violation of the null energy condition near the throat \cite{Morris1988AJP, Visser1996, Lobo2008CQGR, Wall2013CQG}. For a nonspherical throat, the appropriate statement is quasi-local and geometrical, that is on a chosen spacelike hypersurface the throat is a compact two-surface of vanishing mean curvature with a positive flare-out or second-variation condition. A related spacetime formulation uses the expansions of the orthogonal null congruences \cite{HochbergVisser1997PRD,HochbergVisser1998PRD,TomikawaIzumiShiromizu2015PRD}. This feature is not merely a technical detail because the distinction between physical stress-energy and effective stress-energy is central to any assessment of a wormhole model. Consequently, when rotating generalisations are considered, one must keep track not only of the metric functions but also of the source that supports them. In particular, a rotating metric obtained from a prescribed ansatz or a solution-generating prescription need not automatically have a transparent matter interpretation. Rotation is nevertheless an essential ingredient in any comparison with compact astrophysical objects. Stationary and axisymmetric wormholes may exhibit frame dragging, ergoregions, modified photon dynamics, and optical features that differ from those of Kerr black holes. Rotation also makes the field equations considerably more restrictive because the lapse, shift, circumferential radius, shape function, and stress-energy components are coupled by consistency conditions that are absent in the static case. Teo's rotating traversable wormhole metric provided an important phenomenological framework for studying such effects \cite{Teo1998PRD}. Subsequent work has explored slowly rotating wormholes, numerical rotating solutions, and rotating configurations supported by specific matter sectors or by modified gravitational dynamics \cite{Kuhfittig2003PRD, Kashargin2008GC, Kashargin2008PRD, Kleihaus2014PRD, Hoffmann2018PRD, Chew2019PRD, Azad2023, Azad2024PLB, Cisterna2023PRD, Clement2023PLB, Tanghpati2024NPB, Batic2026EPJC, Batic2026CQGSlow}. In parallel, rotating wormholes have been studied as black-hole mimickers through lensing, accretion, and shadow observables \cite{Bambi2013PRD, Bambi2013PRDa, Nedkova2013PRD, Gyulchev2018EPJC, Shaikh2018PRD, Deligianni2021PRD, Deligianni2021PRDa, Kumar2024PRD}. A closely related benchmark is the exact spinning Morris--Thorne wormhole constructed in \cite{Batic2026EPJC}. In that work, the rotating geometry was obtained within a Teo-type ansatz with unit lapse and a Morris--Thorne shape function, yielding an analytic frame-dragging function and a two-parameter asymptotically flat family labelled by the throat radius and the total angular momentum. The analysis of \cite{Batic2026EPJC} included curvature invariants, stress-energy components, energy condition violations, causal structure, shadows, and Geroch--Hansen multipole moments. The present paper is complementary to that construction. Instead of preserving the same spherical throat structure, we introduce a Kerr-like oblate deformation through the functions $\Sigma=r^2+a^2\cos^2{\vartheta}$, $\Delta=r^2+a^2-rb(r)$ where the parameter $a$ enters the metric functions as an oblateness parameter. The physical angular momentum is then fixed asymptotically by the falloff of $g_{t\varphi}$, and it should not be identified with $M_{\rm ADM}a$ because the solution considered below is massless in the ADM (Arnowitt, Deser, and Misner) sense. More precisely, before any optional phenomenological specialisation, $a$ is an oblateness length entering the metric through $a^2$, whereas the sign and magnitude of the rotational dragging are carried by the independently measurable quantity $J_{\rm ADM}$.\\
In this work, we construct an exact rotating extension of the zero-redshift Morris--Thorne wormhole in general relativity. We start with a modified Teo-type stationary and axisymmetric ansatz and express the source as an anisotropic stress tensor in an orthonormal frame comoving with zero-angular-momentum observers. After imposing $N=1$, $\mathcal K=\mathcal K(r)$, and $\omega=\omega(r)$, the Einstein equations and the consistency of the azimuthal pressure equations fix $\mathcal K(r)=r^2+a^2$ and reduce the frame-dragging function to a single radial quadrature. For $b(r)=r_0^2/r$, this quadrature is evaluated in terms of incomplete elliptic integrals and its amplitude is fixed by the independently defined ADM angular momentum $J_{\rm ADM}$. Since $M_{\rm ADM}=0$, the Kerr relation $a=J_{\rm ADM}/M_{\rm ADM}$ is inapplicable, and no remaining field equation relates $a$ to $J_{\rm ADM}$. We therefore introduce the independent dimensionless parameters $\hat a=a/r_0$ and $\hat J=J_{\rm ADM}/r_0^2$. The prescription $J_{\rm ADM}=ar_0$, equivalently $\hat J=\hat a$, is retained only as an explicitly identified one-parameter slice for the numerical plots and tabulated optical observables. The Morris--Thorne static seed is reached in the joint limit $(a,J_{\rm ADM})\to(0,0)$. Notice that the linear prescription is one convenient path to that limit, but it is not unique. 
This separation also makes the logical dependence of the results transparent. The exact dragging profile and the source components depend on $a$ and $J_{\rm ADM}$ separately, whereas the throat geometry, regularity bound, and causal extension do not require the relation $J_{\rm ADM}=ar_0$. Most importantly, within the natural stationary foliation defined by the asymptotically normalised time coordinate and used throughout, the throat is not defined by the coordinate equation $r=\sqrt{r_0^2-a^2}$. It is the unique member $S_0$ of the closed two-surface family $S_\ell$ on which the mean curvature vanishes, the area has a strict local minimum, and both future null expansions vanish. These statements are invariant under reparametrisations of the radial coordinate. The equations $\ell=0$, $\Delta=0$, and $r=r_{\rm th}$ are merely chart-dependent descriptions of that same geometrical surface. Its equatorial proper circumference is $2\pi r_0$, and the regular family has $0\leq |a|<r_0$. The ergoregion threshold is a curve in the $(\hat a,\hat J)$ plane, and the quoted number $0.2087018708$ is the intersection of that curve with the illustrative slice $\hat J=\hat a$. The analytic optical endpoints are likewise derived for independent $a$ and $J_{\rm ADM}$, but they determine only the equatorial capture set. For an equatorial observer this is the horizontal celestial slice $\mathcal S_{\rm eq}=\mathcal S\cap\{\beta=0\}$ of the full two-dimensional shadow set $\mathcal S$. The numerical table and figures represent only that horizontal slice and not an off-equatorial shadow contour. In the signed throat coordinate, $g^{tt}=-1$, the throat world tube is timelike, and closed timelike curves are excluded. We use traversability only in the limited sense of causal passability by test particles. Furthermore, we do not claim to have constructed the first exact rotating wormhole. The Teo ansatz, rotating phantom-matter solutions, solution-generating constructions, Kerr-like black-bounce geometries, and optical separatrix methods all predate the present work. The role of the following comparison is to state precisely what is inherited and what is new within the restricted ansatz $N=1$, $\mathcal K=\mathcal K(r)$, and $\omega=\omega(r)$.
\begin{table*}[t]
\centering
\caption{Representative comparison with previous rotating-wormhole constructions. ``New here'' always means within the assumptions of the present Kerr-like zero-redshift Morris--Thorne ansatz. No priority claim is made for rotating wormholes in general.}
\label{tab:literature-comparison}
\scriptsize
\renewcommand{\arraystretch}{1.24}
\begin{tabular}{p{0.17\textwidth}p{0.30\textwidth}p{0.35\textwidth}}
\toprule
Representative literature & Established construction or method & Distinct result of the present paper \\
\midrule
Teo metrics and optical analyses \cite{Teo1998PRD,Shaikh2018PRD,ChengXuZhao2026EPJC} & Stationary axisymmetric wormhole ansatz with prescribed metric functions; exterior photon orbits and throat-critical rays as optical separatrices. & The Einstein equations and source consistency determine $\mathcal K=r^2+a^2$ and the radial quadrature for $\omega$; for $b=r_0^2/r$ the latter is evaluated in elliptic integrals. The resulting optical calculation is explicitly done only on an equatorial slice. \\
Rotating Ellis/phantom and explicit-matter models \cite{Kashargin2008PRD,Kleihaus2014PRD,Hoffmann2018PRD,Chew2019PRD,DzhunushalievFolomeev2026GRG,Batic2026CQGSlow} & Matter sectors are specified from the outset; many solutions are perturbative or numerical, and their matter interpretation is correspondingly sharper. & The present geometry is exact and analytic but reconstructs only an effective stress tensor. It is therefore complementary, not a microscopic replacement for those matter-supported solutions. \\
Exact solution-generated and Kerr-like families \cite{Azreg2014EPJC,Mazza2021JCAP,KarJanaKar2025PRD,Cisterna2023PRD,Clement2023PLB,Tanghpati2024NPB} & Exact rotating regular, black-bounce, or wormhole geometries obtained from different static seeds, Newman--Janis-type prescriptions, Ehlers/Harrison transformations, or specified electromagnetic and higher-form sectors. & Here the seed is the zero-redshift Morris--Thorne geometry with $b=r_0^2/r$; the ADM mass vanishes, $a$ and $J_{\rm ADM}$ remain independent, and the oblate throat is characterised by minimal-area and null-expansion conditions. \\
Exact spinning Morris--Thorne solution of Ref.~\cite{Batic2026EPJC} & Spherical constant-radius two-geometries, parameters $(r_0,J)$, an elementary frame-dragging profile, separable null dynamics with full shadow contours, and Geroch--Hansen multipoles. & The present family adds an independent oblateness scale $a$, an oblate minimal throat, an elliptic-integral dragging profile, and a two-parameter ergoregion curve. The price of the oblate deformation is loss of full Hamilton--Jacobi separability, so only the equatorial shadow slice is obtained analytically. \\
\bottomrule
\end{tabular}
\end{table*}
Accordingly, the new results claimed here are: the field-equation determination of $\mathcal K=r^2+a^2$ within the stated ansatz, the closed elliptic-integral dragging function for the Morris--Thorne shape function, the independent $(a,J_{\rm ADM})$ parameter structure, the quasi-local minimal and marginal character of the oblate throat together with its regular two-ended extension, the general two-parameter ergoregion onset, and the closed analytic endpoints of the equatorial capture interval. The use of a Teo-type ansatz, minimal-surface flare-out criteria, ergoregion diagnostics, and photon separatrices extends established methods rather than constituting new general principles.\\
The paper is organised as follows. In Sec.~II we introduce the static seed geometry, the Kerr-like Teo ansatz, and the reduced Einstein equations. Moreover, we derive the exact solution with independent $a$ and $J_{\rm ADM}$, and discuss the status of the reconstructed source. In Sec.~III we analyse the throat geometry and causal extension, derive the general two-parameter ergoregion condition and equatorial optical boundaries, and then present the one-parameter numerical illustrations. The final section summarises the parameter dependence and the limitations of the effective matter interpretation.

\section{Derivation of the Kerr-like Morris--Thorne wormhole}

As the wormhole seed metric, \emph{i.e.}, in the absence of rotation, we consider a static, spherically symmetric spacetime. In natural units $c = G_N = 1$, the line element of such a wormhole can be written as \cite{Morris1988AJP, Ellis1973JMP, Ellis1979GRG, Ellis1974JMP, Bronnikov1973APP}
\begin{equation}\label{metric}
  ds^2 = -e^{2\Phi(r)}dt^2+\frac{dr^2}{1-\frac{b(r)}{r}} + \frac{r^2 d\chi^2}{1-\chi^2} + r^2(1-\chi^2)d\varphi^2, \quad \chi\in[-1,1], \quad \varphi\in[0,2\pi).
\end{equation}
Here, $\Phi$ and $b$ are the redshift and shape functions, respectively. We assume that the radial coordinate $r$ increases monotonically from its minimum value $r_0$, representing the throat of the wormhole, to spatial infinity. The matter content acting as a source of the geometry described by \eqref{metric} is modelled in terms of an anisotropic fluid, with energy-momentum tensor given by
\begin{equation}\label{emt}
 T^\alpha{}_\beta=(\rho+p_t)u^\alpha u_\beta+p_t\delta^\alpha{}_\beta+(p_r-p_t)\ell^\alpha\ell_\beta,    
\end{equation}
where $\ell^\alpha$ is a unit spacelike vector orthogonal to the fluid four-velocity $u^\alpha$, that is $\ell^\alpha \ell_\alpha=1$ and $\ell^\alpha u_\alpha=0$. Moreover, $u^\alpha$ must satisfy the condition $g_{\alpha\beta}u^\alpha u^\beta=-1$. These constraints require that
\begin{equation}\label{utnorot}
    u^\alpha=e^{-\Phi(r)}\delta^\alpha{}_t,\quad
    \ell^\alpha=\sqrt{1-\frac{b(r)}{r}}\delta^\alpha{}_r.
\end{equation}
Hence, the mixed energy-momentum tensor is $T^\alpha{}_\beta=\text{diag}(-\rho(r), p_r(r), p_t(r), p_t(r))$, where $\rho$ is the energy density, $p_r$ the radial pressure, and $p_t$ the tangential pressure measured orthogonally to the radial direction. By applying the Einstein field equations $G_{\alpha\beta}=8\pi T_{\alpha\beta}$ alongside the conservation equation $\nabla_\alpha T^{\alpha\beta}=0$, we obtain the following system of equations, where an overdot denotes differentiation with respect to the radial coordinate
\begin{eqnarray}
&&\dot{b}-8\pi r^2\rho=0,\label{eq1}\\
&&2r(r-b)\dot{\Phi}-b-8\pi r^3 p_r=0,\label{eq2}\\
&&r^2(r-b)\ddot{\Phi}+\left[r(r-b)\dot{\Phi}+\frac{b-r\dot{b}}{2}\right](1+r\dot{\Phi})-8\pi r^3 p_t=0,\label{eq3}\\
&&r(\rho+p_r)\dot{\Phi}+r\dot{p}_r+2(p_r-p_t)=0.\label{eq4}
\end{eqnarray}
Using equations \eqref{eq1}--\eqref{eq3}, we can express the energy density, the radial and tangential pressures in terms of the redshift and shape functions as follows
\begin{eqnarray}
\rho&=&\frac{\dot{b}}{8\pi r^2},\label{ed}\\
p_r&=&\frac{1}{8\pi}\left[\frac{2}{r}\left(1-\frac{b}{r}\right)\dot{\Phi}-\frac{b}{r^3}\right],\label{pr}\\
p_t&=&\frac{1}{8\pi}\left(1-\frac{b}{r}\right)\left[\ddot{\Phi}+\dot{\Phi}^2-\frac{r\dot{b}-b}{2r(r-b)}\dot{\Phi}+\frac{\dot{\Phi}}{r}-\frac{r\dot{b}-b}{2r^2(r-b)}\right].\label{pt}
\end{eqnarray}
As a rotating extension of the static wormhole metric \eqref{metric}, we propose a modified Teo ansatz \cite{Teo1998PRD} written in Boyer-Lindquist–type coordinates $(t,r,\chi,\varphi)$ with $\chi=\cos{\theta}$
\begin{equation}\label{rotmet}
  ds^2=-[N^2-(1-\chi^2) \mathcal{K} \omega^2]dt^2+\frac{\Sigma}{\Delta}dr^2-2\omega(1-\chi^2)\mathcal{K} dtd\varphi+\frac{\Sigma }{1-\chi^2} d\chi^2+(1-\chi^2) \mathcal{K} d\varphi^2
\end{equation}
with $\Sigma=r^2+a^2 \chi^2 $ and $\Delta= r^2 + a^2 - r b(r)$. The spatial variables $(r,\theta,\varphi)$ should be understood as oblate spheroidal coordinates, with $a$ playing the role of the oblateness parameter. Here $N(r,\chi)$ is the lapse, $\omega(r,\chi)$ is the angular velocity of inertial frames, and $\mathcal{K}(r,\chi)$ determines the circumferential radius associated with the azimuthal direction. The static seed is obtained in the joint nonrotating and non-oblate limit, for which $a\to0$ and the independently normalised dragging amplitude tends to zero, equivalently $J_{\rm ADM}\to 0$
\begin{equation}
N(r,\chi)\to e^{\Phi(r)},\qquad
\mathcal{K}(r,\chi)\to r^2,\qquad
\omega(r,\chi)\to 0.
\end{equation}
Thus, $a$ has dimensions of length and is a Kerr-like oblateness parameter in $\Sigma$ and $\Delta$. It is not, by itself, the physical angular momentum parameter. The physical ADM angular momentum, however, is determined by the asymptotic falloff of the off-diagonal metric component. Since $g_{t\varphi}=-(1-\chi^2)\mathcal{K}(r,\chi)\omega(r,\chi)$, and $\mathcal{K}(r)\sim r^2$ as $r\to\infty$, asymptotic flatness with angular momentum $J_{\rm ADM}$ requires
\begin{equation}
g_{t\varphi}\sim -\frac{2J_{\rm ADM}}{r}(1-\chi^2),
\end{equation}
or equivalently
\begin{equation}
\omega(r,\chi)\sim \frac{2J_{\rm ADM}}{r^3}.
\end{equation}
Thus $J_{\rm ADM}$ is fixed by the $1/r^3$ coefficient of $\omega$, while $a$ is a separate oblateness length. Only in a Kerr-like spacetime with nonzero ADM mass $M_{\rm ADM}$ may one further identify
\begin{equation}\label{Kerr-rel}
  a=\frac{J_{\rm ADM}}{M_{\rm ADM}}.
\end{equation}
We parameterise the stress tensor required by the geometry in the algebraic form of a type-I anisotropic source. This tetrad decomposition should not be confused with a microscopic fluid model. Its physical status is discussed in Sec.~\ref{subsec:source-status}. Let $\{U^{\alpha}, n_{(r)}^{\alpha}, n_{(\chi)}^{\alpha}, n_{(\varphi)}^{\alpha}\}$ be an orthonormal tetrad comoving with the effective source, with $U_{\alpha}U^{\alpha}=-1$, $n_{(i)\,\alpha}n_{(j)}^{\alpha}=\delta_{ij}$, and $U_{\alpha}n_{(i)}^{\alpha}=0$. In its covariant form, the energy–momentum tensor is
\begin{equation}\label{AEMT}
  \widetilde{T}_{\alpha\beta}=\widetilde{\rho}U_\alpha U_\beta + P_r n_{(r)\alpha}n_{(r)\beta} + P_\chi n_{(\chi)\alpha}n_{(\chi)\beta} + P_\varphi n_{(\varphi)\alpha}n_{(\varphi)\beta},    
\end{equation}
where $\widetilde{\rho}(r,\chi)$ is the energy density measured by comoving observers, and $P_{r}(r,\chi), P_{\chi}(r,\chi), P_{\varphi}(r,\chi)$ are the principal pressures along the radial, polar, and azimuthal directions, respectively. The four-velocity is chosen to describe a ZAMO flow, \emph{i.e.} a congruence corotating with the local inertial frames. By definition, a ZAMO (Zero Angular Momentum Observer) has zero angular momentum, \emph{i.e.} $U_\varphi=g_{\varphi\varphi}(\Omega-\omega)U^t=0$, which implies that its angular velocity satisfies the condition $\Omega = \omega$. Accordingly, we take $U^{\alpha} = N^{-1}(1,0,0,\omega)$, with the normalisation fixed by $U_{\alpha}U^{\alpha}=-1$.  Furthermore, if we impose $n_{(i)\,\alpha}n_{(j)}^{\alpha}=\delta_{ij}$, and $U_{\alpha}n_{(i)}^{\alpha}=0$, a straightforward computation shows that
\begin{equation}
  n^{\alpha}_{(r)}=\left(0,\sqrt{\frac{\Delta}{\Sigma}},0,0\right),\quad
  n^{\alpha}_{(\chi)}=\left(0,0,\sqrt{\frac{1-\chi^2}{\Sigma}},0\right),\quad
  n^{\alpha}_{(\varphi)}=\left(0,0,0,\frac{1}{\sqrt{(1-\chi^2) \mathcal{K}}}\right).
\end{equation}
Finally, the non-trivial Einstein field equations are
\begin{eqnarray}
G_{tt}&=& 8\pi N^2\widetilde{\rho}+8\pi(1-\chi^2)\omega^2 \mathcal{K} P_\varphi\\
G_{t\varphi}&=&-8\pi(1-\chi^2)\omega \mathcal{K} P_\varphi,\label{Gtphi}\\
G_{\varphi\varphi}&=&8\pi(1-\chi^2) \mathcal{K} P_\varphi,\label{Gphiphi}\\
G_{rr}&=&\frac{8\pi \Sigma}{\Delta}P_r,\label{Grr}\\
G_{\chi\chi}&=&\frac{8\pi\Sigma }{(1-\chi^2)}P_\chi,\label{Gchichi}\\
G_{r\chi}&=&0.\label{Grchi}
\end{eqnarray}
Equation \eqref{Grchi} can be expressed as
\begin{equation}\label{cond1}
Q_1(N,\mathcal{K})\partial_\chi\Sigma+Q_2(N,\mathcal{K}) \partial_r\Sigma+Q_3(N,\mathcal{K},\omega)\Sigma=0    
\end{equation}
with
\begin{eqnarray}
Q_1 (N,\mathcal{K}) &=& - \mathcal{K}(1-\chi^2)\,\partial_r(N^2\mathcal{K}), \\
Q_2 (N, \mathcal{K}) &=& -\mathcal{K}\,\partial_\chi\!\left[(1-\chi^2)N^2\mathcal{K}\right], \\
Q_3 (N, \mathcal{K}, \omega) &=& 2\mathcal{K}(1-\chi^2) \left[ N^2\,\partial_{r\chi}\mathcal{K} +2N\mathcal{K}\,\partial_{r\chi}N  - \mathcal{K}^2(1-\chi^2)\,\partial_r\omega\,\partial_\chi\omega \right] \\ && - N^2\,\partial_r \mathcal{K} \left[ (1-\chi^2)\partial_\chi \mathcal{K} + 2\mathcal{K}\chi \right].
\end{eqnarray}
Moreover, equations \eqref{Gtphi} and \eqref{Gphiphi} form an overdetermined linear algebraic system for the single unknown $P_\varphi$. Requiring this system to be consistent leads to the condition $G_{t\varphi} + \omega G_{\varphi\varphi}=0$ which can be rewritten in the form
\begin{equation}\label{cond2}
S_1(N,\mathcal{K})\partial_{rr}\omega + S_2(N,\mathcal{K},\Sigma)\partial_{\chi\chi}\omega + S_3(N,\mathcal{K})\partial_r\omega + S_4(N,\mathcal{K})\partial_{\chi}\omega = 0
\end{equation}
with
\begin{eqnarray}
S_1(N,\mathcal{K}) &=& N \mathcal{K} \Delta (1-\chi^2),\\
S_2(N,\mathcal{K}) &=& N \mathcal{K} (1-\chi^2)^2, \\
S_3 (N,\mathcal{K}) &=& \frac{1}{2}N \mathcal{K}\Delta (1-\chi^2) \,
\partial_r\ln\left(\frac{\mathcal{K}^3\Delta}{N^2}\right),\\
S_4 (N,\mathcal{K}) &=& N \mathcal{K} (1-\chi^2) \left[ \frac{1}{2}(1-\chi^2)\, \partial_\chi\ln\!\left(\frac{\mathcal{K}^3}{N^2}\right) +4\chi \right].
\end{eqnarray} 
If we set $N=1$ (vanishing redshift), choose $\mathcal{K}(r,\chi) = \mathcal{K}(r)$ and $\omega(r,\chi)=\omega(r)$, then the pair of conditions \eqref{cond1} and \eqref{cond2} reduces to
\begin{eqnarray}
(a^2 + r^2)\, \dot{\mathcal{K}} - 2r \, \mathcal{K}(r)  &=& 0 \label{odeK} \\
\ddot \omega + \frac{1}{2} \left( 3\frac{\dot{\mathcal{K}}}{\mathcal{K}} + \frac{\dot \Delta}{\Delta} \right) \dot \omega &=& 0, \label{odeomega}
\end{eqnarray}
where the dot denotes differentiation with respect to $r$. It is worth emphasising that Eq.~\eqref{odeK} is completely decoupled from the shape function $b(r)$. Consequently, the radial function $\mathcal{K}(r)$ can be determined independently of the particular wormhole geometry under consideration. More explicitly, Eq.~\eqref{odeK} can be rewritten as
\begin{equation}
\frac{\dot{\mathcal{K}}}{\mathcal{K}}=\frac{2r}{r^2+a^2},
\end{equation}
which is immediately integrable and gives
\begin{equation}
    \mathcal{K}(r)=c_1(a)\left(r^2+a^2\right),
\end{equation}
where $c_1(a)$ is an integration constant that may, in principle, depend on the oblateness parameter $a$. This constant is fixed by the asymptotic normalisation of the azimuthal metric component. Indeed, asymptotic flatness requires that, as $r\to\infty$,
\begin{equation}
    g_{\varphi\varphi}
    =
    (1-\chi^2)\mathcal{K}(r)
    \sim
    r^2(1-\chi^2),
\end{equation}
so that $\mathcal{K}(r)\sim r^2$. Since
\begin{equation}
    \mathcal{K}(r)
    =
    c_1(a)\left(r^2+a^2\right)
    \sim
    c_1(a)r^2,
    \qquad r\to\infty,
\end{equation}
the asymptotic condition fixes $c_1(a)=1$. Therefore, we have
\begin{equation}\label{Ksol}
\mathcal{K}(r)=r^2+a^2 .
\end{equation}
Substituting Eq.~\eqref{Ksol} into Eq.~\eqref{odeomega}, one obtains
\begin{equation}
\ddot\omega+ \frac{1}{2} \left[ \frac{6r}{r^2+a^2} + \frac{\dot\Delta}{\Delta} \right]\dot\omega=0.
\label{omegaeqK}
\end{equation}
This equation admits the general solution
\begin{equation}
\omega(r)=c_3+c_2\int^r\frac{d\rho}
{(\rho^2+a^2)^{3/2}\sqrt{\Delta(\rho)}}.
\label{omegaformal}
\end{equation}
Thus, for the Morris--Thorne wormhole $b(r)=r_0^2/r$, the frame-dragging function is given by
\begin{equation}\label{omegaMT}
\omega(r)=c_3+c_2\int^r\frac{d\rho}{(\rho^2+a^2)^{3/2}\sqrt{\rho^2+a^2-r_0^2}}.
\end{equation}
Let 
\begin{equation}
k=\frac{a}{r_0},\qquad
U(r):=\frac{r_0}{\sqrt{r^2+a^2}}.
\end{equation}
After imposing the asymptotic condition $\omega(r)\sim 2J_{\rm ADM}/r^3$, Eq.~\eqref{omegaMT} may be written directly as
\begin{equation}
\omega(r)=6J_{\rm ADM}\int_r^\infty\frac{d\rho}{(\rho^2+a^2)^{3/2}\sqrt{\rho^2+a^2-r_0^2}}. 
\label{omegaintegral}
\end{equation}
If we use the substitution $u=r_0/\sqrt{\rho^2+a^2}$, this gives
\begin{equation}
\omega(r)=\frac{6J_{\rm ADM}}{r_0^3}  \int_0^{U(r)}\frac{u^2 du}{\sqrt{(1-u^2)(1-k^2 u^2)}}.  
\end{equation}
The integral above can be easily evaluated using the standard elliptic-integral identity, and we obtain
\begin{equation}\label{omegaexact}
\omega(r)=\frac{6J_{\rm ADM}}{r_0 a^2}\left[F\left(\frac{r_0}{\sqrt{r^2+a^2}},\frac{a}{r_0}\right)-E\left(\frac{r_0}{\sqrt{r^2+a^2}},\frac{a}{r_0}\right)\right],    
\end{equation}
where $F$ and $E$ denote the incomplete elliptic integrals of the first and second kind, respectively, in the convention of Ref.~\cite{Abramowitz1972}. They are defined as
\begin{equation}
F(z,k)=\int_0^z\frac{d\tau}{\sqrt{(1-\tau^2)(1-k^2\tau^2)}},\quad
E(z,k)=\int_0^z\sqrt{\frac{1-k^2\tau^2}{1-\tau^2}}~d\tau.
\end{equation}

As a further consistency check, we compute the Komar angular momentum associated with the axial Killing vector field. Details are given in Appendix~\ref{app:komar}. The corresponding surface integral yields precisely $J_{\rm ADM}$, confirming that it is the ADM/Komar angular momentum measured at the asymptotic end used to normalise the dragging profile. Since the shape function has been fixed to $b(r)=r_0^2/r$, the metric contains no $1/r$ correction in the asymptotic expansion of $g_{rr}$. The ADM mass therefore vanishes, i.e. $M_{\rm ADM}=0$. Consequently, the usual Kerr relation \eqref{Kerr-rel} is not applicable in the present vanishing-ADM-mass asymptotic sector.
It is important that the asymptotic normalisation of $\omega$ has already been fixed by the coefficient $J_{\rm ADM}$ in Eq.~\eqref{omegaintegral}. No further relation involving $a$ is needed for that purpose. After the field equations, asymptotic flatness, and the Komar normalisation have been imposed, no equation remains that ties $J_{\rm ADM}$ to the Kerr-like parameter $a$. At fixed throat scale $r_0$, the exact rotating family is therefore two-dimensional. We define
\begin{equation}\label{independentDimensionlessParameters}
\hat a:=\frac{a}{r_0},\qquad
\hat J:=\frac{J_{\rm ADM}}{r_0^2}.
\end{equation}
The regularity condition constrains only the oblate parameter, $|\hat a|<1$, while $\hat J$ is an independent finite angular-momentum parameter. Because the metric functions $\Sigma$, $\Delta$, and $\mathcal K$ contain $a$ only through $a^2$, the sign of the physical rotation is, in the unrestricted family, the sign of $J_{\rm ADM}$. For the sole purpose of displaying a one-parameter sequence in the numerical ergoregion and equatorial-capture plots, we select the phenomenological slice
\begin{equation}\label{JADMParam}
\hat J=\hat a,\qquad\text{equivalently}\qquad
J_{\rm ADM}=a r_0.
\end{equation}
Equation~\eqref{JADMParam} is neither a Kerr identity nor a consequence of the Einstein equations. It is an optional closure that correlates the otherwise independent oblateness and dragging amplitudes. Other curves $J_{\rm ADM}=J_{\rm ADM}(a)$ are equally compatible with the exact solution. The correct static limit also concerns both parameters. The metric reduces to the zero-redshift Morris--Thorne seed when $a\to0$ and $J_{\rm ADM}\to0$. To see this directly, Eq.~\eqref{omegaintegral} gives the small-$a$ expansion
\begin{equation}
\frac{1}{(\rho^2+a^2)^{3/2}\sqrt{\rho^2+a^2-r_0^2}}=\frac{1}{\rho^3\sqrt{\rho^2-r_0^2}}+\mathcal{O}(a^2).
\end{equation}
Consequently,
\begin{align}
6J_{\rm ADM}\int_r^\infty\frac{d\rho}{(\rho^2+a^2)^{3/2}\sqrt{\rho^2+a^2-r_0^2}}
&=6J_{\rm ADM}\left[\frac{1}{2r_0^3}\arctan\left(\frac{r_0}{\sqrt{r^2-r_0^2}}\right)-\frac{\sqrt{r^2-r_0^2}}{2r_0^2 r^2}\right]+\mathcal{O}(J_{\rm ADM}a^2).
\end{align}
Hence $\omega(r)=\mathcal O(J_{\rm ADM})+\mathcal O(J_{\rm ADM}a^2)$ near $a=0$, and every path for which $J_{\rm ADM}\to0$ yields the static limit. On the illustrative slice \eqref{JADMParam}, $\omega=\mathcal O(a)$ and the next correction is $\mathcal O(a^3)$. By contrast, taking $a\to0$ at fixed nonzero $J_{\rm ADM}$ does not produce a static spacetime. It leaves a nonzero dragging profile. Conversely, taking $J_{\rm ADM}\to0$ at fixed $a\neq0$ yields a static but oblate member of the family, not the spherical Morris--Thorne seed.
\begin{table*}[t]
\centering
\caption{Logical dependence of the principal results on the optional prescription $J_{\rm ADM}=ar_0$. The exact field-equation solution is the family $(r_0,a,J_{\rm ADM})$. The last column identifies only the results specialised to the one-parameter slice $\hat J=\hat a$.}
\label{tab:parameter-dependence}
\renewcommand{\arraystretch}{1.2}
\begin{tabular}{p{0.30\textwidth}p{0.45\textwidth}p{0.17\textwidth}}
\toprule
Result & Parameter dependence in the exact family & Uses $\hat J=\hat a$?\\
\midrule
$\mathcal K=r^2+a^2$ and the elliptic-integral profile $\omega$ & Derived from the field equations; $\omega$ is linear in the independently normalised $J_{\rm ADM}$ & No\\
Quasi-local throat geometry and regularity range & The minimal-surface and null-expansion conditions depend on $r_0$ and $a$; the coordinate representative is $r_{\rm th}=\sqrt{r_0^2-a^2}$, and the geometry is regular for finite $J_{\rm ADM}$ when $|a|<r_0$ & No\\
Two-ended extension, temporal function, and timelike throat & Hold for the exact profile with any finite $J_{\rm ADM}$ & No\\
Stress tensor and energy conditions & Depend separately on $\hat a$ and $\hat J$; the radial NEC and equatorial WEC violations do not require the prescription & No\\
Numerical critical value, equatorial-capture Table~\ref{tab:shadow-observables}, and Figs.~\ref{fig:ergoregion}, \ref{fig:shadow-match-horizon}, and \ref{fig:shadow-match-static} & General thresholds and equatorial capture endpoints are two-parameter; the displayed numbers follow a one-parameter path & Yes\\
\bottomrule
\end{tabular}
\end{table*}

The resulting exact line element of the two-parameter rotating family is
\begin{align}\label{exactrotmet}
ds^2 = & -[1-(1-\chi^2) (r^2+a^2 ) \omega^2]dt^2+\frac{r^2+a^2\chi^2}{r^2 + a^2 - r b(r)}dr^2-2(1-\chi^2) (r^2+a^2 )\omega dtd\varphi \notag \\
       &+\frac{(r^2+a^2\chi^2) }{1-\chi^2} d\chi^2+(1-\chi^2) (r^2+a^2 ) d\varphi^2,
\end{align}
with $\omega(r)$ given by \eqref{omegaexact} and $b(r)=r_0^2/r$. The scalar-polynomial regularity of the solution for $0\leq |a|<r_0$ is checked explicitly in Appendix~\ref{app:regularity}. The limiting value $|a|=r_0$ is excluded from the regular family.

\subsection{Source components and local energy conditions}

The exact solution determines not only the metric functions but also the anisotropic source supporting the geometry. We now give the tetrad components of this source for the final metric \eqref{exactrotmet}. This is useful both as a check of the static limit and as a first diagnostic of the energy conditions. Since the matter tensor is diagonal in the orthonormal frame $\{U^\alpha,n_{(r)}^\alpha,n_{(\chi)}^\alpha,n_{(\varphi)}^\alpha\}$, the relevant projections are
\begin{equation}
8\pi \widetilde\rho=G_{\alpha\beta}U^\alpha U^\beta,\qquad
8\pi P_i=G_{\alpha\beta}n_{(i)}^\alpha n_{(i)}^\beta,\qquad
i=r,\chi,\varphi.
\end{equation}
For compactness, let $q=1-\chi^2$, $\Sigma=r^2+a^2\chi^2$, $\mathcal K=r^2+a^2$, and $\Delta=r^2+a^2-r_0^2$. Using the radial equation for the frame-dragging function,
\begin{equation}
\dot\omega(r)=-\frac{6J_{\rm ADM}}{\mathcal K^{3/2}\sqrt{\Delta}},
\end{equation}
we define the nonnegative rotation contribution
\begin{equation}\label{Wdef}
\mathcal W(r,\chi):=\frac{q\,\mathcal K\,\Delta}{4\Sigma}\dot\omega^2=
\frac{9qJ_{\rm ADM}^2}{\mathcal K^2\Sigma}.
\end{equation}
The apparent singularity of $\dot\omega$ at the throat is therefore absent from the tetrad components, because only the combination $\Delta\dot\omega^2$ enters. A direct projection of the Einstein tensor gives
\begin{subequations}\label{sourceComponents}
\begin{align}
8\pi \widetilde\rho&=\frac{r_0^2\,\mathcal N_\rho}{\mathcal K^2\Sigma^3}
-\mathcal W,\label{rhoSource}\\
8\pi P_r&=-\frac{r^2r_0^2}{\mathcal K\Sigma^2}+\mathcal W,\label{PrSource}\\
8\pi P_\chi&=\frac{r_0^2\left(r^4-a^4\chi^2\right)}{\mathcal K^2\Sigma^2}-\mathcal W,\label{PchiSource}\\
8\pi P_\varphi&=\frac{r_0^2\left(r^2-a^2\chi^2\right)}{\Sigma^3}-3\mathcal W,\label{PphiSource}
\end{align}
\end{subequations}
where
\begin{equation}\label{NrhoDef}
\mathcal N_\rho=-r^6-a^2(1-\chi^2)r^4+a^4(4\chi^2-1)r^2+a^6\chi^2(1+\chi^2).
\end{equation}
As a consistency check, in the joint static limit $(a,J_{\rm ADM})\to(0,0)$, these expressions reduce to
\begin{equation}
8\pi\widetilde\rho=8\pi P_r=-\frac{r_0^2}{r^4},\qquad
8\pi P_\chi=8\pi P_\varphi=\frac{r_0^2}{r^4},
\end{equation}
which is precisely the zero-redshift Morris--Thorne result for $b(r)=r_0^2/r$. For a diagonal stress-energy tensor in an orthonormal frame, the null energy condition (NEC) is equivalent to
\begin{equation}
\widetilde\rho+P_r\ge0,\qquad
\widetilde\rho+P_\chi\ge0,\qquad
\widetilde\rho+P_\varphi\ge0 ,
\end{equation}
while the weak energy condition (WEC) additionally requires $\widetilde\rho\ge 0$. These conditions are central in the usual Morris--Thorne flare-out analysis and in assessments of exotic matter in general relativity \cite{Morris1988AJP, Visser1996, Lobo2008CQGR, Wall2013CQG}. For the general family, the radial NEC combination is especially simple because the rotational contribution $\mathcal W$ cancels between $\widetilde\rho$ and $P_r$. Using the independent parameters \eqref{independentDimensionlessParameters}, the throat value is
\begin{equation}\label{radialNECthroat}
8\pi\left(\widetilde\rho+P_r\right)_{\rm th}=
\frac{\hat a^6q^2-3\hat a^4q+4\hat a^2-2}
{r_0^2\left(1-\hat a^2q\right)^3}.
\end{equation}
This expression contains no $\hat J$. In particular, on the equatorial section $q=1$,
\begin{equation}\label{radialNECeq}
8\pi\left(\widetilde\rho+P_r\right)_{\rm th,eq}
=-\frac{\hat a^4-2\hat a^2+2}{r_0^2(1-\hat a^2)^2}<0,
\qquad |\hat a|<1.
\end{equation}
Thus, the equatorial radial NEC violation is independent of the optional relation between $a$ and $J_{\rm ADM}$. The equatorial energy density is
\begin{equation}\label{rhoEqThroat}
8\pi\widetilde\rho_{\rm th,eq}
=-\frac{\hat a^4-\hat a^2+1+9\hat J^2(1-\hat a^2)}
{r_0^2(1-\hat a^2)^2}<0,
\end{equation}
so the WEC is violated for every regular $\hat a$ and every finite $\hat J$. The angular NEC combinations are
\begin{equation}
8\pi\left(\widetilde\rho+P_\chi\right)_{\rm th,eq}
=-\frac{\hat a^2+18\hat J^2(1-\hat a^2)}
{r_0^2(1-\hat a^2)^2}\leq0,
\end{equation}
and
\begin{equation}
8\pi\left(\widetilde\rho+P_\varphi\right)_{\rm th,eq}
=\frac{\hat a^2-36\hat J^2}{r_0^2(1-\hat a^2)}.
\end{equation}
The polar-angular combination is strictly negative away from the joint static point. The azimuthal combination can have either sign in the unrestricted two-parameter family. On the illustrative slice $\hat J=\hat a\equiv j$, it reduces to $-35j^2/[r_0^2(1-j^2)]<0$. The robust conclusion needed for the wormhole support is already supplied by Eqs.~\eqref{radialNECeq} and \eqref{rhoEqThroat}: the physical stress tensor violates the radial NEC and the WEC at the equatorial throat independently of the phenomenological prescription. Rotation redistributes the remaining components angularly. For example, the radial NEC at the poles is $8\pi(\widetilde\rho+P_r)_{\rm th,pole}=(4\hat a^2-2)/r_0^2$, which changes sign for sufficiently large oblateness. The asymptotic behaviour leads to the same conclusion. As $r\to\infty$,
\begin{eqnarray}
8\pi\widetilde\rho&=&-\frac{r_0^2}{r^4}+\mathcal O(r^{-6}),\qquad
8\pi P_r=-\frac{r_0^2}{r^4}+\mathcal O(r^{-6}),\\
8\pi P_\chi&=&\frac{r_0^2}{r^4}+\mathcal O(r^{-6}),\qquad
8\pi P_\varphi=\frac{r_0^2}{r^4}+\mathcal O(r^{-6}).
\end{eqnarray}
Therefore,
\begin{equation}
  8\pi\left(\widetilde\rho+P_r\right)=-\frac{2r_0^2}{r^4}+\mathcal O(r^{-6})<0
\end{equation}
for sufficiently large $r$. The leading asymptotic NEC and WEC violations are inherited from the Morris--Thorne shape function $b(r)=r_0^2/r$, while the rotational corrections enter at subleading order. This behaviour is consistent with the standard flare-out expectation for traversable wormholes in Einstein gravity and with the stress-energy diagnostics performed for other exact rotating wormhole constructions \cite{Teo1998PRD, Batic2026EPJC}.

\subsubsection{Physical status of the supporting source}
\label{subsec:source-status}
The tensor in Eqs.~\eqref{sourceComponents} is mathematically well defined and covariantly conserved, because it is $G_{\mu\nu}/(8\pi)$ and the contracted Bianchi identity gives $\nabla_\mu T^{\mu\nu}=0$. Its diagonal form in the ZAMO tetrad permits an anisotropic-fluid decomposition, but this decomposition is algebraic because it does not provide a matter action, independent matter field equations, a particle current, a constitutive relation, or an equation of state. In particular, the functions $\widetilde\rho$, $P_r$, $P_\chi$, and $P_\varphi$ depend on both $r$ and $\chi$, and no barotropic closure has been derived. Therefore, quantities such as sound speeds, microscopic causality, and material stability cannot be inferred from the reconstructed tensor alone. At the joint static point $a=J_{\rm ADM}=0$, the source does admit the familiar Ellis--Bronnikov phantom-scalar realisation \cite{Ellis1973JMP, Bronnikov1973APP}. In the proper radial coordinate $r^2=\ell^2+r_0^2$, the reversed-sign massless-scalar action and profile
\begin{equation}\label{staticPhantomRealisation}
S_{\phi}=\frac{1}{2}\int d^4x\,\sqrt{-g}\,
g^{\mu\nu}\partial_\mu\phi\,\partial_\nu\phi,\qquad
\phi(\ell)=\frac{1}{\sqrt{4\pi}}\arctan\!\left(\frac{\ell}{r_0}\right)
\end{equation}
give $8\pi\widetilde\rho=8\pi P_r=-r_0^2/r^4$ and $8\pi P_\chi=8\pi P_\varphi=r_0^2/r^4$, exactly reproducing the static limit above. This is an explicit field-theoretic completion of the seed, but the reversed kinetic sign is ghost-like and the source violates the NEC. It is therefore not conventional stable matter. This static scalar realisation does not extend automatically to the general rotating family. In particular, the reconstructed source generically has $P_\chi\neq P_\varphi$ and distinct $J_{\rm ADM}^2$ contributions in the angular principal stresses, whereas the static single-scalar ansatz produces equal angular pressures. Additional fields, couplings, or effective gravitational terms would be required, and no such completion is derived here. Specific rotating wormholes supported by phantom scalars, electromagnetic fields, three-form sectors, or effective modified-gravity terms are known examples in which a matter action is supplied from the outset \cite{Kleihaus2014PRD, Chew2019PRD, Clement2023PLB, Tanghpati2024NPB, Antoniou2020PRD}. However, the existence of those models does not establish that the present tensor can be generated by any one of them. The strict radial NEC violation at the equatorial throat, together with the negative energy density there, excludes support solely by any matter sector whose total stress tensor obeys the NEC and WEC in Einstein gravity. The source is also not compactly supported because its leading tetrad components decay as $r^{-4}$ and extend to both asymptotic regions. Accordingly, in this paper the rotating source is regarded as an effective phenomenological anisotropic stress tensor required by the exact geometry, not as an established realistic material model. A microphysical completion would require finding fields $\Psi$ and an action for which $T_{\mu\nu}[\Psi]=G_{\mu\nu}/(8\pi)$, verifying the independent equations of motion for $\Psi$, and analysing perturbative stability. We make no such claim here.

\section{Main Results}

Having derived the exact rotating geometry with independent $a$ and $J_{\rm ADM}$, we now discuss its geometrical and optical properties. We first use the original radial chart and embedding diagrams only as a visualisation, and then identify the throat quasi-locally as a compact minimal surface whose orthogonal null expansions vanish. The regular extension, absence of closed timelike curves, and timelike character of the throat are established for the full family $(r_0,a,J_{\rm ADM})$. The ergoregion and equatorial optical formulae are likewise written first in terms of the independent variables $\hat a$ and $\hat J$. Only the numerical critical value and the plotted/tabulated equatorial capture intervals are subsequently restricted to the illustrative slice $\hat J=\hat a\equiv j$. No complete two-dimensional shadow is inferred.

\subsection{Coordinate description and oblateness bound}
This subsection provides a useful chart-based visualisation and the reality bound on the oblateness parameter. The throat itself is defined geometrically, rather than by a coordinate value, in Sec.~\ref{subsec:minimal-area-throat}. 
Let us now examine the intrinsic geometry of the spatial sections associated with the rotating wormhole spacetime. To this end, we restrict attention to a constant-time hypersurface and, within it, select a two-dimensional constant-latitude section specified by
\begin{equation}
    \chi=\chi_0, \qquad 0 \leq |\chi_0|<1 .
\end{equation}
The induced two-dimensional line element on this surface is then given by
\begin{equation}\label{equatorialSlice}
    ds^2
    =
    \frac{r^2+a^2\chi_0^2}{\Delta(r)}\,dr^2
    +(1-\chi_0^2)\left(r^2+a^2\right)d\varphi^2 .
\end{equation}
It is useful to introduce the equatorial circumferential radius $\rho=\sqrt{r^2+a^2}$. This quantity has a coordinate-independent meaning on the equatorial fixed set because if $\eta^\mu=(\partial_\varphi)^\mu$ is the axial Killing field normalised to have $2\pi$-periodic orbits, then the proper circumference of an equatorial orbit is $C_{\rm eq}=2\pi\sqrt{\eta^\mu\eta_\mu}=2\pi\rho$. Since $2\rho\,d\rho=2r\,dr$, $r^2\,dr^2=\rho^2\,d\rho^2$, and $\Delta(r)=r^2+a^2-r_0^2=\rho^2-r_0^2$, the metric \eqref{equatorialSlice} can be written as
\begin{equation}\label{MTSlice}
ds^2=\frac{\rho^2 \left[ \rho^2 - a^2  (1-\chi_0^2)  \right]}{(\rho^2-a^2) (\rho^2-r_0^2)} d\rho^2+ (1-\chi_0^2 ) \rho^2 d\varphi^2 .
\end{equation}
This form is particularly useful because it makes explicit the role of \(\rho\) as the circumferential radius, while also separating the coordinate singular behaviour associated with \(\Delta(r)=0\) from the angular rescaling induced by the fixed value of \(\chi_0\). Taking $\chi_0 = 0$, we precisely obtain the standard spatial geometry of the static Morris--Thorne wormhole, expressed in terms of the equatorial circumferential radius $\rho$.
\begin{equation}\label{MTSliceEquator}
  ds^2=\frac{ d\rho^2  }{1- \dfrac{r_0^2}{\rho^2}} + \rho^2 d\varphi^2 .
\end{equation}
The minimum $C_{\rm eq}=2\pi r_0$ identifies the equatorial section of the candidate neck without referring to a radial coordinate. In the present chart, this same surface is represented by $\rho=r_0$, or equivalently by $\Delta(r)=0$. Solving the latter coordinate equation gives
\begin{equation}\label{rthA}
  r_{\rm th}(a)=\sqrt{r_0^2-a^2}.
\end{equation}
Equation~\eqref{rthA} concerns only the coordinate representative of the neck. Its value decreases monotonically as the oblateness magnitude $|a|$ is increased, while the invariant equatorial circumference remains $2\pi r_0$. The analogy with the Boyer--Lindquist location of the Kerr horizon is therefore purely coordinate-based and should not be confused with the invariant throat definition or with a bound on the independent ADM angular momentum. Reality of this chart requires $|a|\leq r_0$, and the curvature analysis in Appendix~\ref{app:regularity} sharpens this to the regular range $0\leq|a|<r_0$. The endpoint $|a|=r_0$ is singular because $\Sigma_{\rm th}$ vanishes at the equator. For a generic constant-latitude section with \(\chi_0\neq 0\), the proper circumferential radius of the corresponding latitude circle is not \(\rho\) itself, but rather
\begin{equation}
    R(r,\chi_0)
    =
    \sqrt{(1-\chi_0^2)(r^2+a^2)}
    =
    \sqrt{1-\chi_0^2}\,\rho .
\end{equation}
In particular, the minimum value of this radius occurs at the throat and is given by
\begin{equation}
    R_{\rm th}(\chi_0)
    =
    \sqrt{1-\chi_0^2}\,r_0 .
\end{equation}
Thus, away from the equatorial plane, the latitude circles have a smaller proper circumference, as expected from the angular factor \(1-\chi_0^2\). To visualise the intrinsic geometry of the fixed-\(\chi_0\) surface, one may embed it as a surface of revolution in three-dimensional Euclidean space. The Euclidean line element for such an embedding can be written as
\begin{equation}
    ds_{\mathbb{E}^3}^2
    =
    \left[1+\left(\frac{dz}{dR}\right)^2\right]dR^2
    +
    R^2 d\varphi^2 .
\end{equation}
Matching the angular part with the intrinsic metric fixes \(R=R(r,\chi_0)\), while matching the radial part yields
\begin{equation}
    \left(\frac{dz}{dR}\right)^2
    =
    \frac{(r^2+a^2\chi_0^2)(r^2+a^2)}
    {(1-\chi_0^2)r^2\Delta(r)}
    -1 .
\end{equation}
This expression displays the characteristic flaring behaviour of the wormhole throat. Indeed, as the throat is approached one has $\Delta(r)\longrightarrow 0$, and therefore
\begin{equation}
    \left|\frac{dz}{dR}\right|\longrightarrow \infty .
\end{equation}
Hence, the embedded surface becomes vertical at the candidate neck, which is the standard visual signature of a Morris--Thorne-type wormhole. The divergence of \(dz/dR\) should not be interpreted as a curvature singularity of the regular branch \(0\leq |a|<r_0\); rather, it reflects the fact that the circumferential radius reaches a minimum there, so that the embedding surface flares outward on both sides. An embedding diagram depends on the chosen spatial slice and coordinates and is therefore not used as the definition of the throat. The quasi-local characterisation follows next.

\begin{figure}[H]
    \centering
    \includegraphics[width=0.5\linewidth]{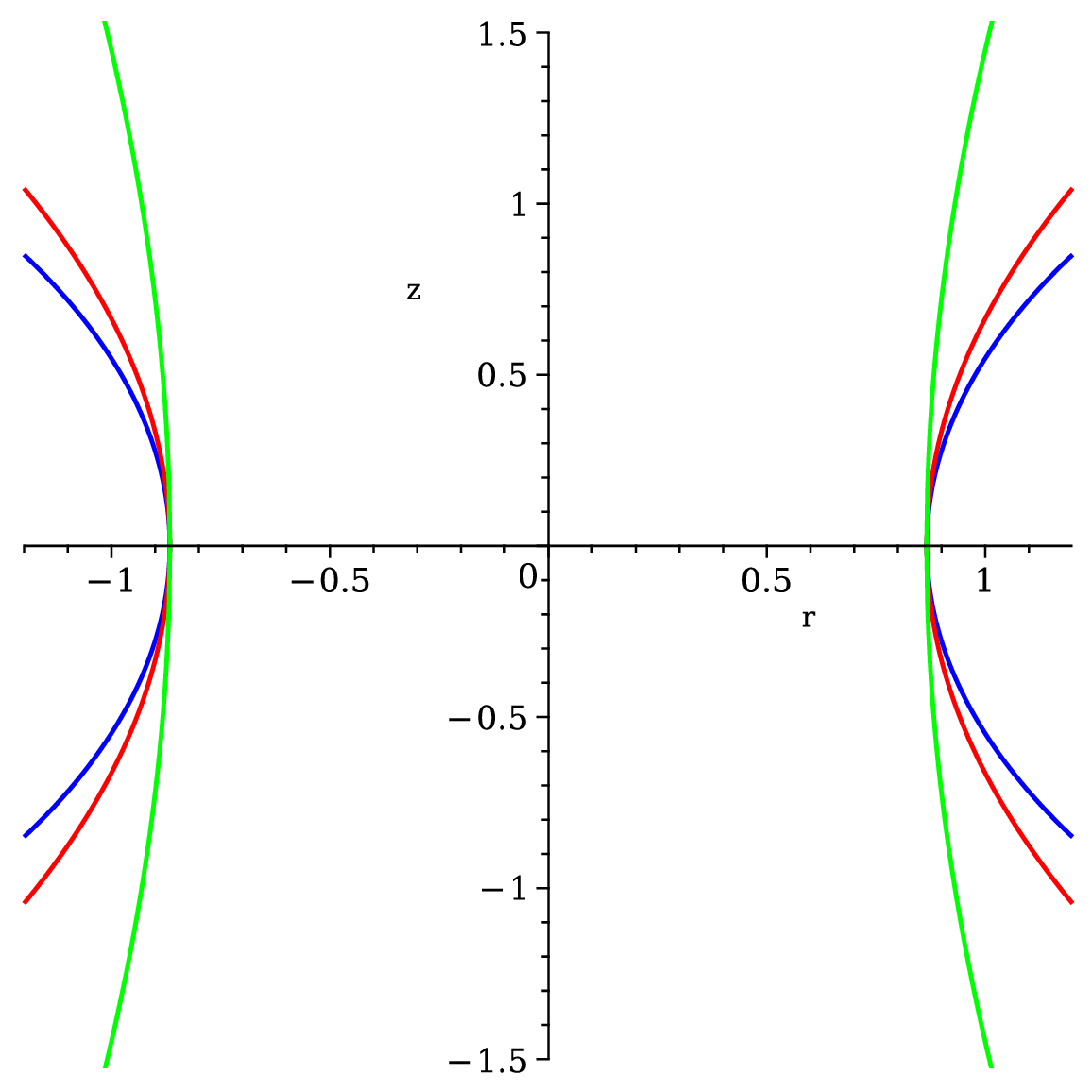}
    \caption{Coordinate embedding profiles for $r_0=1$ and $a=0.5$ at several fixed values of $\chi$. The solid blue curve corresponds to $\chi=0$, the solid red curve to $\chi=0.5$, and the solid green curve to $\chi=0.9$. These profiles visualise flare-out but do not define the throat.}
    \label{fig:ThroatEmbedding}
\end{figure}

\subsection{Quasi-local throat characterisation: minimal area and null expansions}\label{subsec:minimal-area-throat}
A stationary rotating spacetime does not, in general, supply a foliation-independent minimal-surface throat. In the present solution, the unit lapse and the time coordinate normalised at the observer's asymptotic end provide a natural, explicitly specified stationary foliation by the hypersurfaces $t=\mathrm{const.}$ used throughout. Let $S_\ell$ denote their closed axial two-surfaces. Within this foliation we define the throat quasi-locally as the unique member $S_0$ with vanishing mean curvature and positive flare-out, following the geometric minimal-surface viewpoint of Refs.~\cite{HochbergVisser1997PRD,HochbergVisser1998PRD,TomikawaIzumiShiromizu2015PRD}. This replaces a coordinate definition by scalar geometrical conditions and is invariant under reparametrisations of the radial coordinate and coordinate changes intrinsic to $S_\ell$. The equations $\ell=0$, $\Delta=0$, and $r=\sqrt{r_0^2-a^2}$ merely locate the geometrically defined surface in the coordinates used here. We compute the mean curvature of the closed two-surfaces $\ell=\mathrm{const}$, where $r^2(\ell)=\ell^2+r_0^2-a^2$. On a constant-time hypersurface, the spatial metric induced by
\eqref{properMetric} is
\begin{equation}\label{spatial-metric-ell}
h_{ij}dx^i dx^j=\frac{\Sigma_\ell}{r^2(\ell)}\,d\ell^2+\frac{\Sigma_\ell}{1-\chi^2}\,d\chi^2+(1-\chi^2)(\ell^2+r_0^2)\,d\varphi^2,
\end{equation}
where $\Sigma_\ell=\ell^2+r_0^2-a^2(1-\chi^2)$, and $r^2(\ell)=\ell^2+r_0^2-a^2$. For $0\leq |a|<r_0$, both $r^2(\ell)$ and $\Sigma_\ell$ are strictly positive at and near the throat, so this spatial metric is regular. The induced metric on the closed two-surface $S_\ell$, defined by $\ell = \mathrm{const.}$, is
\begin{equation}\label{induced-metric-ell}
\sigma_{AB}dx^A dx^B=\frac{\Sigma_\ell}{1-\chi^2}\,d\chi^2+(1-\chi^2)(\ell^2+r_0^2)\,d\varphi^2 ,
\end{equation}
with area element
\begin{equation}\label{area-element-ell}
dA=\sqrt{\det\sigma}\,d\chi d\varphi
=\sqrt{\Sigma_\ell(\ell^2+r_0^2)}\,d\chi d\varphi.
\end{equation}
The outward-pointing unit normal to \(S_\ell\) inside the spatial slice is
\begin{equation}\label{unit-normal-ell}
s^i\partial_i=\frac{r(\ell)}{\sqrt{\Sigma_\ell}}\,\partial_\ell .
\end{equation}
Since the spatial metric is block diagonal with respect to \(\ell\) and the angular coordinates, the mean curvature of \(S_\ell\) is
\begin{equation}\label{mean-curvature-general}
H=\sigma^{AB}D_A s_B=\frac{1}{2\sqrt{h_{\ell\ell}}}
\sigma^{AB}\partial_\ell\sigma_{AB},
\end{equation}
where $D$ is the covariant derivative associated with $h_{ij}$. Using
\begin{equation}
h_{\ell\ell}=\frac{\Sigma_\ell}{r^2(\ell)},\qquad
\sigma_{\chi\chi}=\frac{\Sigma_\ell}{1-\chi^2},\qquad
\sigma_{\varphi\varphi}=(1-\chi^2)(\ell^2+r_0^2),
\end{equation}
one obtains
\begin{equation}
\sigma^{AB}\partial_\ell\sigma_{AB}=2\ell\left(\frac{1}{\Sigma_\ell}+\frac{1}{\ell^2+r_0^2}\right).
\end{equation}
Therefore, the mean curvature is
\begin{equation}\label{mean-curvature-ell}
H(\ell,\chi)=\ell\,\frac{r(\ell)}{\sqrt{\Sigma_\ell}}\left(\frac{1}{\Sigma_\ell}
+\frac{1}{\ell^2+r_0^2}\right).
\end{equation}
This expression has two immediate consequences. First,
\begin{equation}\label{mean-curvature-zero-throat}
H(0,\chi)=0,
\end{equation}
for every \(\chi\in[-1,1]\). Thus, the throat \(S_0\) is a minimal surface in the sense that its mean curvature vanishes. Second, for \(0\leq |a|<r_0\), all factors multiplying \(\ell\) in \eqref{mean-curvature-ell} are strictly positive. Hence
\begin{equation}
	H(\ell,\chi)>0 \quad \text{for} \quad \ell>0,
	\qquad
	H(\ell,\chi)<0 \quad \text{for} \quad \ell<0 .
	\label{mean-curvature-sign}
\end{equation}
The mean curvature, therefore, changes sign at \(\ell=0\), with the two sides of the wormhole expanding away from the throat. This is the standard local geometrical signature of a wormhole throat.

The same result can be phrased directly in terms of the first variation of area. The area of \(S_\ell\) is
\begin{equation}
	A(\ell)
	=
	2\pi
	\int_{-1}^{1}
	\sqrt{
		\Sigma_\ell(\ell^2+r_0^2)
	}\,
	d\chi .
	\label{area-ell-surfaces-mean}
\end{equation}
Using the standard first-variation formula,
\begin{equation}
	\frac{dA}{d\ell}
	=
	\int_{S_\ell}
	\sqrt{h_{\ell\ell}}\,H\,dA ,
	\label{first-variation-area}
\end{equation}
and substituting \eqref{mean-curvature-ell}, one finds
\begin{equation}
	\frac{dA}{d\ell}
	=
	2\pi
	\int_{-1}^{1}
	\ell
	\left(
	\frac{1}{\Sigma_\ell}
	+
	\frac{1}{\ell^2+r_0^2}
	\right)
	\sqrt{\Sigma_\ell(\ell^2+r_0^2)}
	\,d\chi .
	\label{area-derivative-ell}
\end{equation}
The integrand is proportional to \(\ell\), while the remaining factor is strictly positive in the regular range \(0\leq |a|<r_0\). Consequently,
\begin{equation}
	\left.\frac{dA}{d\ell}\right|_{\ell=0}=0,
	\qquad
	\frac{dA}{d\ell}>0 \ \text{for}\ \ell>0,
	\qquad
	\frac{dA}{d\ell}<0 \ \text{for}\ \ell<0 .
\end{equation}
Thus, the area decreases as one approaches \(\ell=0\) from the left and increases as one moves away from \(\ell=0\) to the right. Hence \(S_0\) is the unique minimal-area member of the foliation \(S_\ell\).

Equivalently, the second derivative of the area at the throat is positive:
\begin{equation}
	\left.
	\frac{d^2A}{d\ell^2}
	\right|_{\ell=0}
	=
	2\pi
	\int_{-1}^{1}
	\left(
	\frac{1}{\Sigma_0}
	+
	\frac{1}{r_0^2}
	\right)
	\sqrt{\Sigma_0 r_0^2}
	\,d\chi
	>0 ,
	\label{area-second-variation-throat}
\end{equation}
where $\Sigma_0=r_0^2-a^2(1-\chi^2)>0$. This confirms that the throat is not merely a stationary member of the foliation, but a strict local minimum of the area. The same surface also admits a spacetime, null-expansion characterisation. Let $n^\mu$ be the future-directed unit normal to the stationary slices and let $s^\mu$ be the outward unit normal \eqref{unit-normal-ell} to $S_\ell$ within a slice. Define the two future null normals $k_\pm^\mu=n^\mu\pm s^\mu$, $k_+\cdot k_-=-2$, and their expansions $\Theta_\pm=\sigma^{\mu\nu}\nabla_\mu k^\pm_\nu$. For the ADM form \eqref{properMetric}, the lapse is unity and the shift has only an azimuthal component, $\beta^\varphi=-\Omega(\ell)$. Because the spatial metric is stationary, the extrinsic curvature of the $t=\mathrm{const.}$ slices has only a mixed $\ell\varphi$-component. In particular, its trace tangent to $S_\ell$ vanishes, that is $\sigma^{AB}K_{AB}=0$. Consequently, up to the interchange of the labels $+$ and $-$ associated with the orientation convention, we have
\begin{equation}\label{nullExpansionsEqualH}
\Theta_+=H,\qquad
\Theta_-=-H.
\end{equation}
Equation~\eqref{mean-curvature-zero-throat} therefore implies
\begin{equation}\label{nullExpansionsZero}
\left.\Theta_+\right|_{S_0}=\left.\Theta_-\right|_{S_0}=0.
\end{equation}
Moreover, the outward normal derivative of the mean curvature is pointwise positive because
\begin{equation}\label{invariantFlareOut}
\left.s^\mu\nabla_\mu H\right|_{S_0}=\frac{r_{\rm th}^2}{\Sigma_0}
\left(\frac{1}{\Sigma_0}+\frac{1}{r_0^2}\right)>0,
\qquad |a|<r_0.
\end{equation}
Since $H$ is stationary and axisymmetric, $n^\mu\nabla_\mu H=0$. With Eqs.~\eqref{nullExpansionsEqualH} and \eqref{invariantFlareOut}, the corresponding null flare-out derivatives satisfy
\begin{equation}\label{nullFlareOut}
\left.k_\pm^\mu\nabla_\mu\Theta_\pm\right|_{S_0}=\left.s^\mu\nabla_\mu H\right|_{S_0}>0.
\end{equation}
Thus, $S_0$ is doubly marginal with respect to the two future null normals and has a strict spatial and null flare-out. The vanishing of $H$ and $\Theta_\pm$, together with the positivity in Eqs.~\eqref{area-second-variation-throat} and \eqref{invariantFlareOut}, is the coordinate-invariant throat statement within the canonical stationary foliation. It is stronger and more informative than quoting the value of $r$ at which $\Delta$ vanishes. We do not claim that this construction furnishes a foliation-independent definition for arbitrary dynamical wormholes. Rather, it identifies the throat of the present stationary geometry through scalar properties of the surface and its null normals. The associated world tube $\mathcal T=\bigcup_t S_0$ is shown in Sec.~\ref{subsec:proper-radial} to be timelike rather than null. Finally, setting $\ell=0$ in \eqref{induced-metric-ell}, the intrinsic metric on the throat is
\begin{equation}\label{throat-induced-metric}
d\sigma_{\rm th}^2=\frac{r_0^2-a^2(1-\chi^2)}{1-\chi^2}\,d\chi^2+r_0^2(1-\chi^2)\,d\varphi^2.
\end{equation}
The corresponding throat area is
\begin{equation}\label{throat-area}
A_{\rm th}=2\pi r_0\int_{-1}^{1}\sqrt{r_0^2-a^2(1-\chi^2)}\,d\chi.
\end{equation}
In the nonrotating limit \(a\to0\), this reduces to $A_{\rm th}\longrightarrow 4\pi r_0^2$, that is the usual area of a spherical Morris--Thorne throat. We have therefore established the throat without using its coordinate location as a definition. It is the unique member $S_0$ of the closed family $S_\ell$ in the natural stationary foliation with $H=0$, $A'(0)=0$, $A''(0)>0$, and $\Theta_+=\Theta_-=0$, while the normal flare-out derivative is positive. The statements $\ell=0$, $\Delta=0$, and $r=r_{\rm th}$ are only coordinate representatives while the minimum equatorial circumference and the intrinsic metric \eqref{throat-induced-metric} provide additional geometrical diagnostics.

\subsection{Proper radial coordinate, causal status, and two-ended extension}
\label{subsec:proper-radial}

The geometrically defined throat $S_0$ is represented in the one-sided chart by $\Delta(r_{\rm th})=0$. The coordinate $r$ used in \eqref{exactrotmet} covers one exterior region of the wormhole and becomes singular at this surface. This is analogous to the usual Morris--Thorne radial coordinate, which reaches a minimum at the throat and must be replaced by a proper radial coordinate in order to display the two-sided geometry \cite{Morris1988AJP, Visser1996, Lobo2008CQGR}. In the present rotating geometry, the coefficient $g_{rr}=\Sigma/\Delta$ depends on the polar coordinate. Thus, no single angle-independent radial coordinate can be the proper distance along every meridian. Nevertheless, there is a natural equatorial proper radial coordinate which also provides a regular two-sided radial coordinate for the full spacetime. To this end, we define
\begin{equation}\label{ellDef}
\ell=\pm\sqrt{r^2+a^2-r_0^2}=\pm\sqrt{r^2-r_{\rm th}^2}
\end{equation}
with $r_{\rm th}^2=r_0^2-a^2$. Equivalently,
\begin{equation}\label{rOfEll}
r(\ell)=\sqrt{\ell^2+r_0^2-a^2},\qquad
\Delta(r(\ell))=\ell^2,\qquad
\mathcal K(\ell)=r^2(\ell)+a^2=\ell^2+r_0^2.
\end{equation}
The two signs in \eqref{ellDef} correspond to the two sides of the wormhole. The original exterior region is $\ell>0$, the throat is at $\ell=0$, and the second exterior region is $\ell<0$. The two asymptotic domains are therefore obtained as $\ell\to+\infty$, and  $\ell\to-\infty$. On the equatorial section $\chi=0$, Eq.~\eqref{MTSlice} becomes
\begin{equation}\label{eqProperSlice}
  ds^2_{\rm eq}=d\ell^2+\left(\ell^2+r_0^2\right)d\varphi^2.
\end{equation}
Thus $\ell$ is exactly the proper radial distance on the equatorial constant time slice, and the equatorial circumferential radius $\rho=\sqrt{\ell^2+r_0^2}$ has a minimum $\rho_{\rm th}=r_0$ at $\ell=0$. For the full four-dimensional metric, it is useful to write $q:=1-\chi^2$ and 
\begin{equation}
\Sigma_\ell:=r^2(\ell)+a^2\chi^2=\ell^2+r_0^2-a^2 q.
\end{equation}
Since $dr=(\ell/r(\ell))\,d\ell$, the radial part of the metric transforms as
\begin{equation}
\frac{\Sigma}{\Delta}\,dr^2=\frac{\Sigma_\ell}{r^2(\ell)}\,d\ell^2 .
\end{equation}
Hence, the two-sided form of the metric is
\begin{equation}\label{properMetric}
ds^2=-dt^2+\frac{\Sigma_\ell}{r^2(\ell)}\,d\ell^2+\frac{\Sigma_\ell}{1-\chi^2}\,d\chi^2
+(1-\chi^2)(\ell^2+r_0^2)\left[d\varphi-\Omega(\ell)dt\right]^2.
\end{equation}
For $0\leq |a|<r_0$, one has
\begin{equation}
r^2(0)=r_0^2-a^2>0,\qquad
\Sigma_\ell\big|_{\ell=0}=r_0^2-a^2(1-\chi^2)\geq r_0^2-a^2>0 .
\end{equation}
Therefore
\begin{equation}
g_{\ell\ell}\big|_{\ell=0}=\frac{r_0^2-a^2(1-\chi^2)}{r_0^2-a^2}
\end{equation}
is finite and positive for the regular family. Moreover,
\begin{equation}
\det g=-\frac{(\ell^2+r_0^2)\Sigma_\ell^2}{r^2(\ell)}
\end{equation}
is finite and nonzero at the throat for $0\leq |a|<r_0$, apart from the standard coordinate degeneracy of the azimuthal angle on the symmetry axis. Thus, the singularity of $g_{rr}$ at $\Delta=0$ is a coordinate singularity of the one-sided $r$-chart. A technical point is important for the rotating case. If one simply takes $\omega(r)$ from \eqref{omegaexact} and substitutes $r=r(|\ell|)$, then $\omega$ is continuous but not differentiable at $\ell=0$, because $d\omega/dr\propto\Delta^{-1/2}$. The smooth two-sided extension is instead obtained by using the signed coordinate $\ell$. We define
\begin{equation}\label{OmegaEll}
\Omega(\ell)=6J_{\rm ADM}\int_{\ell}^{+\infty}\frac{d\lambda}{\left(\lambda^2+r_0^2\right)^{3/2}\sqrt{\lambda^2+r_0^2-a^2}}.
\end{equation}
For $\ell>0$, this expression reduces exactly to the one-sided profile $\omega(r)$ in \eqref{omegaexact}. Indeed, using $r=\sqrt{\ell^2+r_0^2-a^2}$, one recovers
\begin{equation}
\Omega(\ell)=6J_{\rm ADM}\int_{r(\ell)}^{+\infty}\frac{d\rho}
{(\rho^2+a^2)^{3/2}\sqrt{\rho^2+a^2-r_0^2}}=\omega(r(\ell)).
\end{equation}
However, \eqref{OmegaEll} is regular at the throat because
\begin{equation}\label{OmegaPrimeEll}
\frac{d\Omega}{d\ell}=-\frac{6J_{\rm ADM}}{(\ell^2+r_0^2)^{3/2}\sqrt{\ell^2+r_0^2-a^2}},
\end{equation}
which is finite at $\ell=0$ for $0\leq |a|<r_0$. The divergence of $d\omega/dr$ in the one-sided coordinate is therefore only a consequence of using $r$ instead of the throat-adapted coordinate $\ell$. The right asymptotic region is normalised by construction so that
\begin{equation}\label{OmegaAsympPlus}
\Omega(\ell)=\frac{2J_{\rm ADM}}{\ell^3}+\mathcal O(\ell^{-5}),\qquad
\ell\to+\infty.
\end{equation}
Thus $g_{t\varphi}\sim-2J_{\rm ADM}(1-\chi^2)/\ell$, as required in the exterior region used for the ADM normalisation. The ADM angular momentum is therefore read off at the asymptotic end whose angular coordinate is normalized to be nonrotating. In the present convention this is the \(\ell\to+\infty\) end, where \(\Omega(\ell)\to0\) and
\begin{equation}
	g_{t\varphi}
	\sim
	-\frac{2J_{\rm ADM}}{\ell}\,(1-\chi^2).
\end{equation}
Thus \(J_{\rm ADM}\) refers to the angular momentum measured with respect to the inertial frame at the right asymptotic end. In the second asymptotic region, let $R=-\ell$. Then
\begin{equation}\label{OmegaAsympMinus}
\Omega(\ell)=\Omega_- -\frac{2J_{\rm ADM}}{R^3}+\mathcal O(R^{-5}),\qquad R\to+\infty,
\end{equation}
where
\begin{equation}\label{OmegaMinus}
\Omega_-=6J_{\rm ADM}\int_{-\infty}^{+\infty}
\frac{d\lambda}{\left(\lambda^2+r_0^2\right)^{3/2}\sqrt{\lambda^2+r_0^2-a^2}}
=\frac{12J_{\rm ADM}}{r_0a^2}\left[K\!\left(\frac{a}{r_0}\right)-E\!\left(\frac{a}{r_0}\right)\right].
\end{equation}
Here $K$ and $E$ are complete elliptic integrals. The second asymptotic end is therefore asymptotically flat in the rigidly rotating angular coordinate $\varphi_-=\varphi-\Omega_- t$. In this coordinate, the combination appearing in the metric is $d\varphi-\Omega(\ell)dt =d\varphi_- -\left[\Omega(\ell)-\Omega_-\right]dt$, and the effective frame-dragging function tends to zero as $\ell\to-\infty$. Thus, the geometry connects two asymptotically flat regions through the throat at $\ell=0$, but the nonrotating frames at the two infinities differ by the constant angular velocity $\Omega_-$. This is a standard feature that can occur in rotating wormhole geometries, i.e. one can choose a frame nonrotating at either asymptotic end, but not at both ends simultaneously when $J_{\rm ADM}\ne 0$. The form \eqref{properMetric} also permits a direct check of the causal status of the regular family. In a rotating spacetime, the existence of an ergoregion should not be confused with the existence of a horizon, and the absence of closed timelike curves must be verified rather than assumed \cite{Visser1996, Teo1998PRD, Wald1986, Batic2026EPJC}. The inverse metric corresponding to \eqref{properMetric} has
\begin{equation}\label{inverseCausalComponents}
g^{tt}=-1,\qquad
g^{t\varphi}=-\Omega(\ell),\qquad
g^{\varphi\varphi}=\frac{1}{(1-\chi^2)(\ell^2+r_0^2)}-\Omega^2(\ell),
\end{equation}
and
\begin{equation}
g^{\ell\ell}=\frac{r^2(\ell)}{\Sigma_\ell},\qquad
g^{\chi\chi}=\frac{1-\chi^2}{\Sigma_\ell}.
\end{equation}
Hence, we have $g^{\mu\nu}\nabla_\mu t\nabla_\nu t=g^{tt}=-1$. Thus, $t$ is a global temporal function on the regular two-ended extension. After fixing the time orientation so that future-directed causal curves have $dt/d\lambda>0$, no causal curve can close on itself. Therefore, the regular family $0\leq |a|<r_0$ contains no closed timelike curves. This conclusion is stronger than the elementary check of the axial circles, although these are also non-timelike since $g_{\varphi\varphi}=(1-\chi^2)(\ell^2+r_0^2)\geq0$. The throat is not a horizon. The normal to a surface $\ell={\rm const.}$ has squared norm
\begin{equation}\label{ellNormalNorm}
g^{\mu\nu}\partial_\mu\ell\partial_\nu\ell=g^{\ell\ell}=
\frac{r^2(\ell)}{\Sigma_\ell}>0 ,
\end{equation}
so the hypersurface $\ell=0$ is timelike rather than null. At the throat,
\begin{equation}
g^{\ell\ell}\big|_{\ell=0}=\frac{r_0^2-a^2}{r_0^2-a^2(1-\chi^2)}>0,\qquad
0\leq |a|<r_0 .
\end{equation}
Moreover, the lapse in \eqref{properMetric} is identically unity and the $t$--$\varphi$ block has determinant
\begin{equation}
g_{tt}g_{\varphi\varphi}-g_{t\varphi}^2=-(1-\chi^2)(\ell^2+r_0^2),
\end{equation}
which does not vanish away from the usual symmetry axis. The ergosurface, when present, is therefore not a horizon. It is only the surface on which the stationary Killing vector $\partial_t$ becomes null. The two asymptotic regions are causally connected through the throat. To see this explicitly, consider curves with fixed $\chi$ and angular velocity $d\varphi/dt=\Omega(\ell)$. Along these curves,
\begin{equation}
  ds^2=-dt^2+\frac{\Sigma_\ell}{r^2(\ell)}\,d\ell^2.
\end{equation}
Hence, any curve satisfying
\begin{equation}\label{timelikeCrossingCondition}
  \left|\frac{d\ell}{dt}\right|<\frac{r(\ell)}{\sqrt{\Sigma_\ell}}
\end{equation}
is timelike and can be chosen to cross $\ell=0$ smoothly. Since such curves can be extended toward either $\ell\to+\infty$ or $\ell\to-\infty$, there is no horizon barrier separating the throat from the asymptotic regions in the regular extension. Equatorial radial geodesics provide a simple independent check. For $\chi=0$ and $L=0$, the null geodesic equation gives
\begin{equation}\label{nullCrossingGeodesic}
  \left(\frac{d\ell}{d\widetilde\lambda}\right)^2=1,
\end{equation}
while radial timelike geodesics satisfy
\begin{equation}\label{timelikeCrossingGeodesic}
  \left(\frac{d\ell}{d\tau}\right)^2=E^2-1 .
\end{equation}
Thus, radial null geodesics and radial timelike geodesics with $E>1$ cross the throat. In this precise sense, the throat is causally traversable. We do not, however, claim full Morris--Thorne traversability for finite-size observers, since tidal force bounds, acceleration constraints, backreaction, and dynamical stability are not analysed here.

\subsection{Ergoregion}

The independent dimensionless parameters are $\hat a=a/r_0$ and $\hat J=J_{\rm ADM}/r_0^2$, with $|\hat a|<1$. For fixed $\hat a$, the magnitude $|\hat J|$ controls the strength of frame dragging, while the sign of $\hat J$ fixes its orientation. In terms of $x=r/r_0$, the metric component relevant for the ergoregion is
\begin{equation}
g_{tt}(x,\chi)=-\left[1-(1-\chi^2)(x^2+\hat a^2)\widetilde\omega^{\,2}(x)\right],
\end{equation}
where
\begin{equation}\label{omegaDimensionlessGeneral}
\widetilde\omega(x):=r_0\omega(r_0x)=\frac{6\hat J}{\hat a^2}\left[
F\left(\frac{1}{\sqrt{x^2+\hat a^2}},\hat a\right)
-E\left(\frac{1}{\sqrt{x^2+\hat a^2}},\hat a\right)
\right].
\end{equation}
At $\hat a=0$, Eq.~\eqref{omegaDimensionlessGeneral} is understood through the finite limit of the integral representation \eqref{omegaintegral}. Define the nonnegative function $
H(x,\chi):=(1-\chi^2)(x^2+\hat a^2)\widetilde\omega^{\,2}(x)$. An ergoregion exists if and only if $\sup H>1$. The angular factor is maximal at $\chi=0$, so the onset is controlled by
\begin{equation}
H_{\rm eq}(x)=(x^2+\hat a^2)\widetilde\omega^{\,2}(x),\qquad 
x\in[x_{\rm th},\infty),\qquad 
x_{\rm th}=\sqrt{1-\hat a^2}.
\end{equation}
Introduce $u=(x^2+\hat a^2)^{-1/2}$. Then, $u=1$ at the throat and $u\to 0$ at infinity, and
\begin{equation}
H_{\rm eq}=\frac{36\hat J^2}{\hat a^4}\left[\frac{F(u,\hat a)-E(u,\hat a)}{u}\right]^2.
\end{equation}
Using
\begin{equation}
F(u,\hat a)-E(u,\hat a)=\hat a^2\int_0^u
\frac{\tau^2\,d\tau}{\sqrt{(1-\tau^2)(1-\hat a^2\tau^2)}},
\end{equation}
and observing that the integrand is positive and strictly increasing for $0<u<1$ and $0<|\hat a|<1$, one concludes that $[F-E]/u$ increases with $u$. Hence, $H_{\rm eq}$ is maximised at the throat for every fixed pair $(\hat a,\hat J)$. The general onset condition is therefore the critical curve
\begin{equation}\label{criticalErgoCurve}
\frac{6|\hat J|}{\hat a^2}\left[K(|\hat a|)-E(|\hat a|)\right]=1,\qquad 0<|\hat a|<1,
\end{equation}
or equivalently
\begin{equation}
|\hat J|_{\rm c}(\hat a)=\frac{\hat a^2}{6\left[K(|\hat a|)-E(|\hat a|)\right]}.
\end{equation}
The continuous $\hat a\to 0$ limit is $|\hat J|_{\rm c}(0)=2/(3\pi)$. For fixed $\hat a$, no ergoregion is present when $|\hat J|<|\hat J|_{\rm c}(\hat a)$. The ergosurface first touches the throat at equality, and a genuine ergoregion forms for larger $|\hat J|$. On the one-parameter numerical slice $\hat J=\hat a\equiv j$, Eq.~\eqref{criticalErgoCurve} reduces to
\begin{equation}\label{criticalSpinCondition}
\frac{6}{|j|}\left[K(|j|)-E(|j|)\right]=1.
\end{equation}
This equation has a unique root in $0<|j|<1$, as shown in Appendix~\ref{app:critical-spin}, and gives $|j_{\rm c}|\approx0.2087018708$. Thus, the quoted number is not a universal relation between $a$ and $J_{\rm ADM}$. It is the critical point along the explicitly chosen path $\hat J=\hat a$.

\begin{figure}[H]
  
  \begin{subfigure}[b]{0.45\textwidth}
    \centering
    \includegraphics[width=\textwidth]{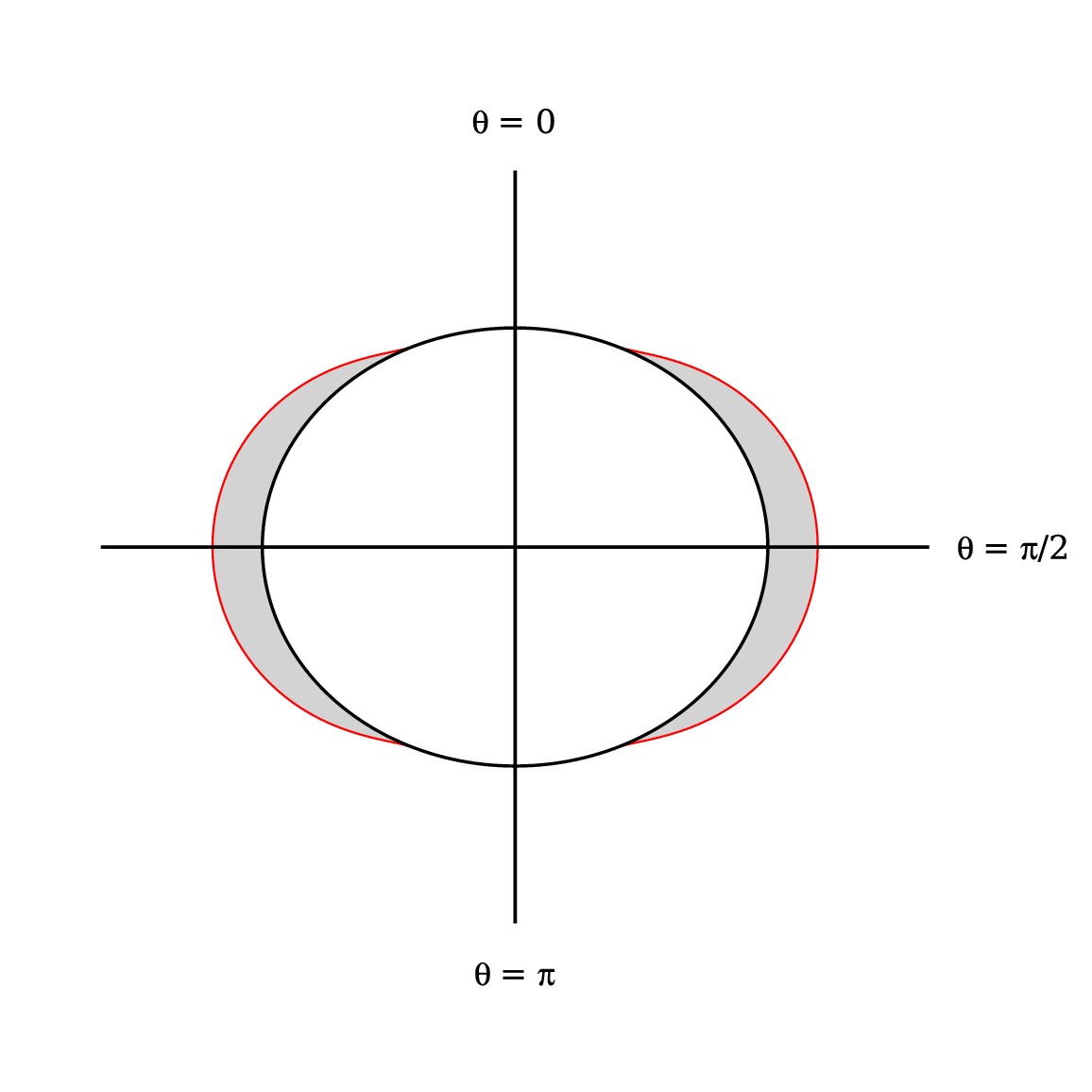}
    \caption{$\hat a=\hat J=0.5$}
  \end{subfigure}
  \hfill
  \begin{subfigure}[b]{0.45\textwidth}
    \centering
    \includegraphics[width=\textwidth]{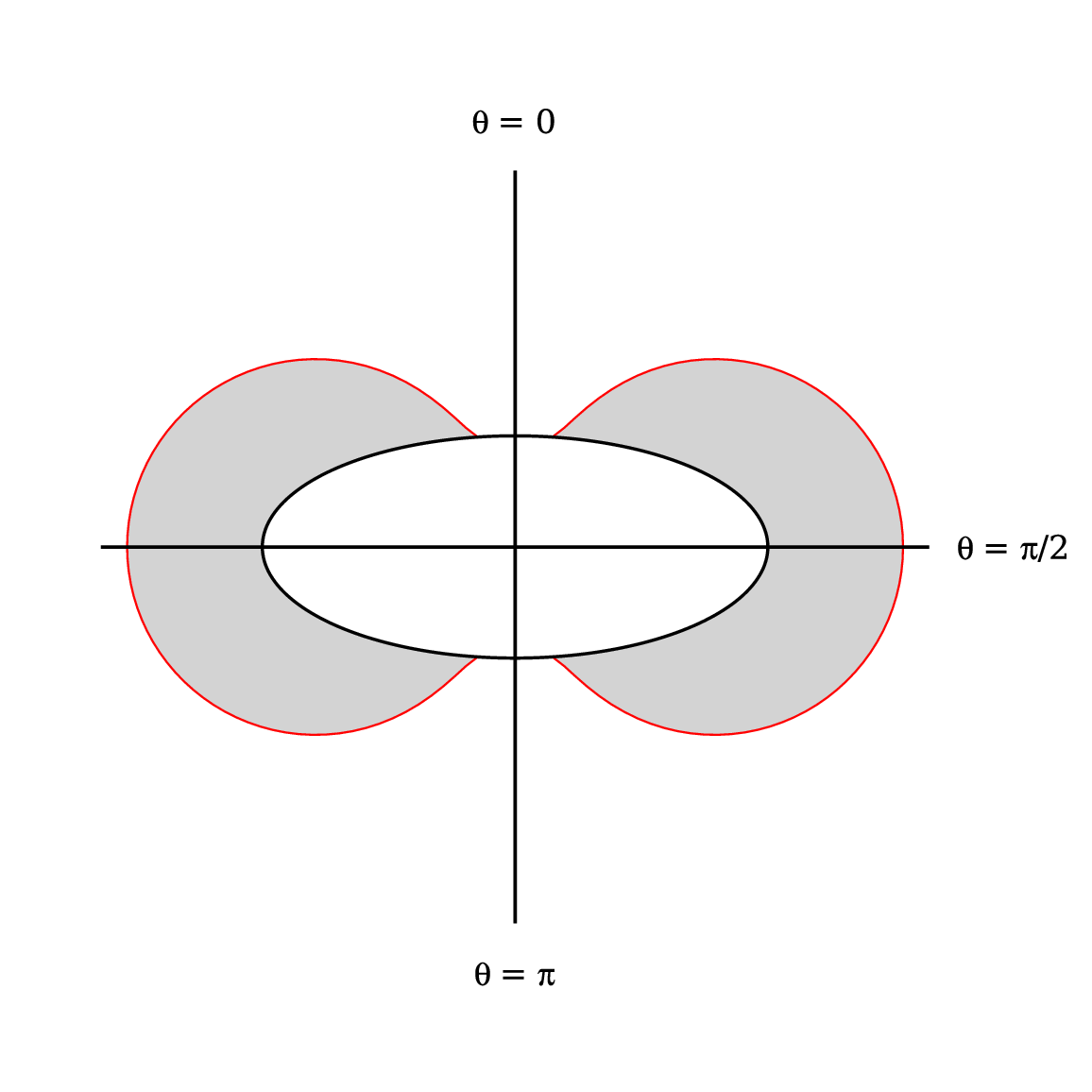}
    \caption{$\hat a=\hat J=0.9$}
  \end{subfigure}
  \caption{Meridional cross sections of the wormhole throat in spheroidal coordinates. The solid black curve denotes the throat boundary and the solid red curve the ergosurface. Both panels lie on the illustrative one-parameter slice $\hat J=\hat a$.}
  \label{fig:ergoregion}
\end{figure}

\subsection{Equatorial Capture Interval}

Let $\mathcal S\subset\mathbb R^2_{(\alpha,\beta)}$ denote the complete two-dimensional set of dark directions on the sky of a distant observer under the illumination prescription stated below. Because the Hamilton--Jacobi equation is not separable away from the reflection plane, this subsection determines only $\mathcal S_{\rm eq}:=\mathcal S\cap\{\beta=0\}$, for an observer in the equatorial plane. Thus, the result is a one-dimensional capture interval on the horizontal celestial axis. It is not a closed shadow contour, and no claim is made here about the vertical extent, area, oblateness, or off-equatorial boundary of $\mathcal S$. The analytic equations are kept in the general form with independent $a$ and $J_{\rm ADM}$. The restriction $J_{\rm ADM}=ar_0$ is imposed only when evaluating Table~\ref{tab:shadow-observables} and the two comparison figures. The use of exterior unstable photon orbits and marginal throat trajectories as capture separatrices follows the established Teo-wormhole optical framework \cite{Shaikh2018PRD}. What is new here is their closed analytic specialisation and branch classification for the present nonseparable oblate geometry. The motion of test particles is governed by the Hamilton--Jacobi equation
\begin{equation}\label{hamiltonJacobi}
\frac{\partial S}{\partial \lambda}=-\frac{1}{2}g^{\mu\nu}\frac{\partial S}{\partial x^\mu}\frac{\partial S}{\partial x^\nu},
\end{equation}
where $\lambda$ is an affine parameter along the geodesic, $g_{\mu\nu}$ denotes the metric tensor, and $S$ is the Jacobi action. Since the spacetime is stationary and axisymmetric, the geodesic motion admits two conserved quantities: the particle energy $E$ and the angular momentum $L$ associated with rotations around the symmetry axis. For the rotating wormhole metric \eqref{rotmet}, the Hamilton--Jacobi equation is not separable in full generality. Although the Hamilton--Jacobi equation is not separable in the full four-dimensional geometry, equatorial null geodesics form a dynamically consistent sector of the geodesic flow.
\begin{lemma}
The hypersurface $\chi=0$ is a totally geodesic invariant submanifold of the
spacetime \eqref{exactrotmet}. Consequently, any geodesic initially tangent to
$\chi=0$ remains confined to $\chi=0$.
\end{lemma}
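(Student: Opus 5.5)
The plan is to use the reflection symmetry $\chi\mapsto-\chi$. First I will observe that every metric function in \eqref{exactrotmet} depends on $\chi$ only through $\chi^2$: namely $\Sigma=r^2+a^2\chi^2$, $1-\chi^2$, $\Delta(r)$, $\mathcal K(r)$ and $\omega(r)$. Hence the map $\iota:(t,r,\chi,\varphi)\mapsto(t,r,-\chi,\varphi)$ is an isometry, and its fixed-point set is exactly $\{\chi=0\}$. Fixed-point sets of isometries are totally geodesic. The reason is that if $\gamma$ is a geodesic with initial point on $\chi=0$ and initial velocity tangent to it, then $\iota\circ\gamma$ is a geodesic with the same initial data. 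By uniqueness of geodesics we get $\iota\circ\gamma=\gamma$, so $\chi(\gamma(\lambda))=-\chi(\gamma(\lambda))=0$ for all $\lambda$. This gives both the totally geodesic property and the confinement statement at once.

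As a concrete cross-check I will also verify the claim directly through the geodesic equation for $\chi$. I will write the Lagrangian $\mathcal L=\tfrac12 g_{\mu\nu}\dot x^\mu\dot x^\nu$ and compute the Euler--Lagrange equation for $\chi$:
\begin{equation*}
\frac{d}{d\lambda}\left(\frac{\Sigma}{1-\chi^2}\dot\chi\right)=\frac12\,\partial_\chi g_{\mu\nu}\,\dot x^\mu\dot x^\nu .
\end{equation*}
Each $g_{\mu\nu}$ is even in $\chi$, so $\partial_\chi g_{\mu\nu}$ is odd and vanishes at $\chi=0$. Therefore $\chi\equiv0$, $\dot\chi\equiv0$ solves the $\chi$-equation, provided the remaining equations for $(t,r,\varphi)$ are solved on the slice. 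Those remaining equations restricted to $\chi=0$ form a closed ODE system for the reduced three-metric.

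Invoking uniqueness of solutions to the geodesic ODE then shows that any geodesic with $\chi(0)=0$ and $\dot\chi(0)=0$ coincides with this confined solution. Equivalently, the second fundamental form of $\chi=0$ is $K_{\mu\nu}\propto\partial_\chi g_{\mu\nu}|_{\chi=0}=0$, since the unit normal $n=\sqrt{(1-\chi^2)/\Sigma}\,\partial_\chi$ is hypersurface-orthogonal with no cross terms $g_{\chi\mu}$ for $\mu\neq\chi$.

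The only delicate point is regularity, because uniqueness needs a smooth metric. At $\Delta=0$ the $r$-chart degenerates, so I will carry out the argument in the two-sided chart \eqref{properMetric}. There $\Sigma_\ell$, $\Omega(\ell)$ and $r^2(\ell)$ are smooth and positive for $|a|<r_0$, and they remain even in $\chi$. The conclusion then holds on the whole regular extension, including geodesics crossing the throat. The symmetry axis $\chi=\pm1$ does not meet $\chi=0$, so it does not interfere.
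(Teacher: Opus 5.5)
Your proof is correct and takes the paper's route: evenness of all $g_{\mu\nu}$ in $\chi$ makes $\chi\mapsto-\chi$ an isometry whose fixed-point set $\chi=0$ is totally geodesic, and the direct check uses $\partial_\chi g_{\mu\nu}|_{\chi=0}=0$ together with uniqueness for the geodesic ODE. Your additions tighten the paper's terse uniqueness step: you construct the confined solution from the reduced $(t,r,\varphi)$ system, note the absence of $g_{\chi\mu}$ cross terms, and move to the smooth two-sided chart \eqref{properMetric} at the throat (minor slip: $\Omega(\ell)$ is smooth but carries the sign of $J_{\rm ADM}$, so only its smoothness, not positivity, is relevant).
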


\begin{proof}
The metric coefficients of \eqref{exactrotmet} are even functions of $\chi$. Indeed, they depend on $\chi$ only through $\chi^2$ and $1-\chi^2$. Therefore, the spacetime is invariant under the discrete reflection $\chi\longmapsto -\chi$. The fixed-point set of this reflection is the hypersurface $\chi=0$. Since the reflection is an isometry, its fixed-point set is totally geodesic. Equivalently, one may verify this directly from the geodesic equation for $\chi$. If a geodesic satisfies $\chi(0)=0$, and $\dot{\chi}(0)=0$, then all terms in the $\chi$-equation of motion vanish at $\chi=0$, because $
\left.\partial_\chi g_{\mu\nu}\right|_{\chi=0}=0$. Hence $\ddot{\chi}(0)=0$. By the uniqueness of solutions to the geodesic equations, the solution with initial data tangent to $\chi=0$ remains on $\chi=0$ for all values of the affine parameter. Thus $\chi=0$ is an invariant totally geodesic submanifold. $\square$
\end{proof}
Therefore, in order to determine $\mathcal S_{\rm eq}$ rather than the full shadow $\mathcal S$, we restrict the analysis to the equatorial plane $\chi=0$ and set the derivatives with respect to $\chi$ to zero. Moreover, we take the reduced Jacobi action of the form
\begin{equation}\label{ansatzS}
	S = \frac{1}{2}\mu^2\lambda - Et + L\varphi + S_r(r),
\end{equation}
where $\mu$ is the rest mass of the test particle. Substituting the ansatz \eqref{ansatzS} into \eqref{hamiltonJacobi}, one obtains the following equatorial geodesic equations
\begin{eqnarray}
\frac{\mathrm d t}{\mathrm d\lambda}&=&\frac{E-\omega L}{N^2},\label{tdotGeo}\\
\frac{\mathrm d\varphi}{\mathrm d\lambda}&=&\frac{L}{\mathcal{K}}+\frac{\omega(E-\omega L)}{N^2},\label{phidotGeo}\\
\left(\frac{\mathrm d r}{\mathrm d\lambda}\right)^2&=&\frac{\Delta}{\Sigma}
\left[\frac{(E-\omega L)^2}{N^2}-\frac{L^2}{ \mathcal{K}}-\mu^2\right].\label{rdotGeo}
\end{eqnarray}
Since we are interested in photon motion, we set $\mu=0$. We also introduce the impact parameter $\xi=L/E$, together with the rescaled affine parameter $\widetilde{\lambda}=E\lambda$. In this way, the explicit dependence on the photon energy is removed, and the null geodesics are parametrised only by the impact parameter $\xi$. The radial equation can then be written in the energy-like form
\begin{equation}\label{radialEffectiveEq}
\left(\frac{\mathrm d r}{\mathrm d\widetilde{\lambda}}\right)^2 + V_{\rm eff}(r;\xi) = 0,
\end{equation}
where the effective potential is
\begin{equation}\label{Veff}
V_{\rm eff}(r;\xi) =\frac{\Delta}{\Sigma} R(r;\xi), \quad R(r; \xi) = \frac{\xi^2}{\mathcal{K}} - \frac{(1-\omega\xi)^2}{N^2}.
\end{equation}
As shown in Sec.~\ref{subsec:proper-radial}, the regular two-ended extension contains two asymptotic regions that are causally connected through the throat. One of these regions is assumed to be illuminated by an external light source, whereas no luminous sources are placed near the throat or in the opposite asymptotic region. Photons emitted in the illuminated region can then follow two qualitatively different classes of trajectories. Some photons are scattered by the wormhole geometry and return to the observer's asymptotic region, while others reach the throat and cross to the other side. A distant observer located in the illuminated region receives only the scattered photons. The photons that cross the throat do not return to that observer and therefore generate a dark interval on the observer's sky. This dark interval, seen against the luminous background, is identified as the wormhole's equatorial shadow slice $\mathcal S_{\rm eq}$. Its two endpoints, not a two-dimensional shadow boundary, are determined by the critical photon trajectories separating scattering orbits from throat-crossing orbits. A photon is scattered back to infinity only if its radial motion admits a turning point, namely
\begin{equation}
\frac{\mathrm d r}{\mathrm d\widetilde{\lambda}}=0 .
\end{equation}
One endpoint of the critical capture interval is generated by an unstable circular photon orbit in the equatorial plane. It is obtained by imposing the circular-orbit conditions away from the throat,
\begin{equation}\label{photonOrbitConditions1}
V_{\rm eff}(r;\xi)=0,\qquad
\frac{\mathrm d V_{\rm eff}}{\mathrm d r}(r;\xi)=0,\qquad
\frac{\mathrm d^2 V_{\rm eff}}{\mathrm d r^2}(r;\xi)\leq 0 .
\end{equation}
Equivalently, since \(\Delta>0\) outside the throat, these conditions can be written in terms of \(R(r;\xi)\) as
\begin{equation}\label{photonOrbitConditionsR}
R(r;\xi)=0,\qquad
\frac{\mathrm dR}{\mathrm dr}(r;\xi)=0,\qquad \frac{\mathrm d^2 R}{\mathrm d r^2}(r;\xi)\leq 0.
\end{equation}
The other endpoint is not generated by a second circular photon orbit. Instead, it is associated with the marginal throat-crossing trajectory and satisfies
\begin{equation}\label{throatBoundaryCondition}
  R(r_{\rm th};\xi)=0, \qquad \frac{\mathrm d R}{\mathrm d \ell}(0;\xi)\leq 0,
\end{equation}
for photons arriving from the $\ell > 0$ exterior. Thus, the two endpoints of the equatorial shadow slice have different geometrical origins. One is controlled by an unstable circular photon orbit outside the throat, whereas the other is controlled by the boundary between reflected and throat-crossing rays.

\subsubsection{Critical Impact Parameters}

For unstable circular photon orbits that do not coincide with the throat, namely for orbits located away from the surface $\Delta=0$, the factor $\Delta/\Sigma$ in the effective potential does not vanish. For the explicit rotating solution \eqref{exactrotmet}, the first two conditions, $R=0$ and $\mathrm dR/\mathrm dr=0$, yield two branches for the critical impact parameter
\begin{equation}\label{xipm}
  \xi_\pm=\left[\left(\omega(r) \pm \frac{1}{\sqrt{\mathcal{K}(r)}}\right)^{-1}\right]_{r=r_{\rm ph}}.
\end{equation}
Here $r_{\rm ph}$ denotes the radius of the circular photon orbit. At this point, a comment on the branch structure is necessary. The condition $R(r;\xi)=0$ gives
\begin{equation}\label{branchCondition}
1-\omega(r)\xi=s\,\frac{\xi}{\sqrt{\mathcal{K}(r)}},\qquad s=\pm1.
\end{equation}
However, the two signs in Eq.~\eqref{branchCondition} do not, in general, correspond to two circular photon orbits at the same radius. Indeed, imposing the second circularity condition, $dR/dr=0$, gives
\begin{equation}\label{branchDerivativeCondition}
-\frac{1}{\mathcal{K}^2(r)}\frac{d \mathcal{K}}{dr}+\frac{2s}{\sqrt{\mathcal{K}(r)}}\frac{d\omega}{dr}=0.
\end{equation}
Using
\begin{equation}\label{omegaprimeBranch}
\frac{d\omega}{dr}=-\frac{6J_{\rm ADM}}{\mathcal{K} ^{3/2}(r)\sqrt{\Delta(r)}},
\end{equation}
one finds that, for $J_{\rm ADM}>0$, only the branch with $s=-1$ satisfies the circular orbit condition. Conversely, for $J_{\rm ADM}<0$, the admissible circular branch is the one with $s=+1$. Squaring Eq.~\eqref{branchDerivativeCondition} yields the algebraic equation for $r_{\rm ph}$, but this operation removes the sign information and must not be interpreted as producing two independent circular orbit branches for a fixed orientation of the ADM angular momentum. For $J_{\rm ADM}>0$, we denote by $\xi_-$ the critical impact parameter associated with the circular photon orbit
\begin{equation}\label{xiRightEdge}
\xi_-=\left[\left(\omega-\frac{1}{\sqrt{\mathcal{K}}}\right)^{-1}\right]_{r=r_{\rm ph}}.
\end{equation}
The opposite endpoint of the equatorial capture interval is not generated by another circular photon orbit at the same radius. Instead, it is associated with photon trajectories that reach the wormhole throat and cross to the second asymptotic region. The condition for such a marginal throat-crossing trajectory is most clearly formulated using the proper radial coordinate introduced in Sec.~\ref{subsec:proper-radial}. On the equatorial plane, one has
\begin{equation}
r^2=\ell^2+r_{\rm th}^2,\qquad
\Delta=\ell^2,\qquad
\frac{dr}{d\widetilde\lambda}=\frac{\ell}{r}\frac{d\ell}{d\widetilde\lambda}.
\end{equation}
Since $\Sigma=r^2$ on the equatorial plane, the radial null equation \eqref{radialEffectiveEq} is equivalent, for $\ell\ne 0$, to
\begin{equation}\label{ellRadialEquation}
\left(\frac{d\ell}{d\widetilde\lambda}\right)^2=(1-\omega\xi)^2-\frac{\xi^2}{\mathcal K}=
-R(r(\ell);\xi).
\end{equation}
By continuity, \eqref{radialEffectiveEq} also gives the regular throat crossing condition at $\ell=0$. A photon crosses the throat if $-R(r_{\rm th};\xi)>0$, whereas the marginal ray separating crossing orbits from reflected orbits satisfies \eqref{throatBoundaryCondition}. Solving this equation gives the second endpoint of $\mathcal S_{\rm eq}$ for $J_{\rm ADM}>0$. The two algebraic values obtained from the marginal throat condition are
\begin{equation}\label{xiLeftEdge}
\xi^{\rm th}_\pm
=
\frac{r_0a^2}
{6J_{\rm ADM}\left[
K\left(\frac{a}{r_0}\right)-E\left(\frac{a}{r_0}\right)
\right]\pm a^2},
\end{equation}
For $a=0$, Eq.~\eqref{xiLeftEdge} is understood by continuity. Moreover, using $K(k)-E(k)=\pi k^2/4+\mathcal O(k^4)$ gives $\xi^{\rm th}_\pm=r_0/[\tfrac{3\pi}{2}\hat J\pm1]$, where $K(k)$ and $E(k)$ denote the complete elliptic integrals of the first and second kind, respectively. We now determine which of these two branches is compatible with the marginal throat-crossing condition. It is preferable to perform this sign analysis in the regular throat coordinate $\ell$, rather than in the one-sided coordinate $r$. Indeed, $d\omega/dr$ diverges as $r\to r_{\rm th}^{+}$, whereas $\Omega'(\ell)$ is finite at $\ell=0$. On the equatorial plane, the reduced radial function can be written as
\begin{equation}
R(\ell;\xi)
=
\frac{\xi^2}{\ell^2+r_0^2}
-
\left[1-\Omega(\ell)\xi\right]^2 .
\end{equation}
Differentiating with respect to \(\ell\) gives
\begin{equation}
\frac{dR}{d\ell}
=
-\frac{2\ell\xi^2}{(\ell^2+r_0^2)^2}
+
2\xi\Omega'(\ell)\left[1-\Omega(\ell)\xi\right].
\end{equation}
At the throat, \(\ell=0\), this reduces to
\begin{equation}\label{dRdellThroat}
\left.
\frac{dR}{d\ell}
\right|_{\ell=0}
=
2\xi\Omega'(0)\left(1-\Omega_{\rm th}\xi\right),
\end{equation}
where $\Omega_{\rm th}:=\Omega(0)$. The marginal condition \(R(0;\xi)=0\) implies
\begin{equation}\label{branchRelationThroat}
1-\Omega_{\rm th}\xi
=
s\,\frac{\xi}{r_0},
\qquad
s=\pm1 .
\end{equation}
Substituting \eqref{branchRelationThroat} into \eqref{dRdellThroat}, we find
\begin{equation}
\left.
\frac{dR}{d\ell}
\right|_{\ell=0}
=
\frac{2s\xi^2}{r_0}\Omega'(0).
\end{equation}
Throughout this discussion, we assume $J_{\rm ADM}>0$. This fixes the orientation of the frame dragging independently of the sign convention for the oblate parameter $a$. Since
\begin{equation}
\Omega'(0)=-\frac{6J_{\rm ADM}}{r_0^3\sqrt{r_0^2-a^2}}<0,
\end{equation}
the condition
\begin{equation}
\left.\frac{dR}{d\ell}
\right|_{\ell=0}\leq 0
\end{equation}
selects the branch $s=+1$. Therefore, for positive ADM angular momentum, the admissible marginal throat-crossing value is
\begin{equation}
\xi^{\rm th}_+=\frac{1}{\Omega_{\rm th}+1/r_0}.
\end{equation}
Using
\begin{equation}
\Omega_{\rm th}=\frac{6J_{\rm ADM}}{r_0a^2}\left[K\left(\frac{a}{r_0}\right)-E\left(\frac{a}{r_0}\right)\right],
\end{equation}
this gives
\begin{equation}
\xi^{\rm th}_+=\frac{r_0a^2}{6J_{\rm ADM}\left[
K\left(\frac{a}{r_0}\right)-E\left(\frac{a}{r_0}\right)
\right]+a^2}.
\end{equation}
The other algebraic branch, corresponding to $s=-1$, gives
\begin{equation}
\left.\frac{dR}{d\ell}\right|_{\ell=0}>0,
\end{equation}
and is therefore incompatible with marginal throat crossing from the exterior side. In what follows, we denote the relevant throat-induced equatorial endpoint simply by $\xi_+\equiv \xi^{\rm th}_+$. $r_{\rm ph}$ is determined by the algebraic equation
\begin{equation}\label{rphEquation}
36J_{\rm ADM}^2=r_{\rm ph}^2\left(r_{\rm ph}^2+a^2-r_0^2\right).
\end{equation}
The positive root is
\begin{equation}\label{rphCorrected}
r_{\rm ph}=\left[\frac{1}{2}\left(r_0^2-a^2+\sqrt{(r_0^2-a^2)^2+144J_{\rm ADM}^2}\right)\right]^{1/2}.
\end{equation}
Since $r_{\rm th}^2=r_0^2-a^2$, it follows that
\begin{equation}
r_{\rm ph}^2-r_{\rm th}^2=\frac{1}{2}\left[\sqrt{(r_0^2-a^2)^2+144J_{\rm ADM}^2}-(r_0^2-a^2)\right]\geq 0.
\end{equation}
The inequality is strict whenever $J_{\rm ADM}\neq0$. Thus, the exterior circular orbit is produced by nonzero angular momentum and lies outside the throat. This conclusion does not require $J_{\rm ADM}=ar_0$. For the branch relevant when $J_{\rm ADM}>0$, the second derivative of the reduced radial function is
\begin{equation}\label{RsecondDerivativePhoton}
\left.\frac{d^2R}{dr^2}\right|_{r=r_{\rm ph}}=-\frac{\xi_-^2\left(r_{\rm ph}^4+36J_{\rm ADM}^2\right)}{18J_{\rm ADM}^2\left(r_{\rm ph}^2+a^2\right)^2}<0,\qquad 
J_{\rm ADM}\neq0.
\end{equation}
Therefore, the circular photon orbit is unstable.

\subsubsection{Connectedness of the Capture Interval}

We establish that the set of equatorial impact parameters leading to capture by the wormhole is connected. This point is important because it ensures that the equatorial shadow slice is bounded by only two critical values of \(\xi\), rather than by several disconnected capture windows. On the equatorial plane, and for the explicit rotating solution considered here, the null radial equation may be written in the form
\begin{equation}\label{globalRadialEquation}
\left(\frac{dr}{d\widetilde\lambda}\right)^2=-\frac{\Delta}{\Sigma}\,R(r;\xi),\qquad
R(r;\xi)=\frac{\xi^2}{r^2+a^2}-\left[1-\omega(r)\xi\right]^2 .
\end{equation}
In the exterior region $r>r_{\rm th}$, the prefactor $\Delta/\Sigma$ is strictly positive. Hence, the physically accessible part of the radial motion is characterised by $R(r;\xi)\leq 0$, and radial turning points outside the throat are precisely the zeros of $R(r;\xi)$. Moreover, since the spacetime is asymptotically flat and $\omega(r)\to 0$ as $r\to\infty$, the reduced radial function satisfies $R(r;\xi)=-1+\mathcal O(r^{-2})$ as $r\to\infty$. Thus, photons arriving from the observer's asymptotic region initially lie in the allowed region of the radial motion. Such a photon is reflected back to the same asymptotic region if and only if, along its inward trajectory, the function $R(r;\xi)$ reaches zero at some radius $r>r_{\rm th}$. Conversely, if no such exterior zero exists, the radial motion remains allowed all the way down to the throat, and the photon crosses into the second asymptotic region. The boundary between reflected and throat-crossing trajectories can therefore arise only when the first exterior zero of $R(r;\xi)$ is created or destroyed. For a fixed regular pair $|\hat a|<1$ and $\hat J\neq0$, this can occur in precisely two geometrically distinct ways. The first possibility is that the first zero occurs at an interior degenerate root, that is
\begin{equation}\label{globalInteriorBoundary}
R(r;\xi)=0,\qquad
\partial_r R(r;\xi)=0,\qquad
r>r_{\rm th}.
\end{equation}
This is the usual circular-photon-orbit condition. The additional inequality $\partial_r^2R(r;\xi)<0$, verified above for the relevant branch, identifies this orbit as unstable and, therefore, as a genuine separatrix between captured and scattered null rays. The second possibility is that the first zero is not created in the interior of the exterior domain, but instead is pushed to the endpoint of the radial interval. In this case, the critical ray is marginally allowed at the throat and satisfies $R(r_{\rm th};\xi)=0$, with the corresponding one-sided condition, most naturally expressed in the regular proper radial coordinate $\ell$, selecting the branch that is reached by photons arriving from the $\ell>0$ exterior. We determined the global structure of the boundary conditions by solving the circular-orbit system \eqref{globalInteriorBoundary} for all admissible exterior radii $r\in(r_{\rm th},\infty)$. The endpoint case was then obtained independently from the throat condition. For each regular pair $(\hat a,\hat J)$ with $\hat J\neq0$, the algebraic classification leaves a single exterior unstable circular branch contributing to the capture boundary. The other endpoint is supplied by the marginal throat-crossing branch. After excluding spurious roots, interior double roots, and nonphysical throat-compatible branches, no additional admissible boundary branches remain. Hence, the equatorial capture set forms one connected interval in impact-parameter space, with endpoints given by the exterior circular photon orbit and the admissible throat-crossing value $\mathcal C_{\rm cap}=[\,\xi_-,\xi_+\,]$ or $\mathcal C_{\rm cap}=[\,\xi_+,\xi_-\,]$, depending on the orientation convention adopted for the ADM angular momentum and for the impact parameter. In either convention, the physical conclusion is the same: the equatorial shadow slice has exactly two endpoints. One endpoint is controlled by the unstable circular photon orbit outside the throat, while the other endpoint is controlled by the marginal trajectory that just reaches the throat.

\subsubsection{Projection onto the equatorial celestial axis}

The relations \eqref{xiRightEdge} and \eqref{xiLeftEdge} therefore determine the two endpoints of the equatorial capture interval in impact-parameter space. However, a distant observer does not measure the impact parameter directly. Instead, the observer sees its projection onto the observer's sky, namely the plane orthogonal to the line of sight and passing through the observer. We denote the corresponding celestial coordinate by $\alpha$. This coordinate specifies the apparent angular position of the image on the observer's sky. In terms of the photon trajectory, it is defined by \cite{bray1986kerr}
\begin{equation}\label{alphaDef}
\alpha = \lim_{r\to\infty} \left(r^2\sin\theta_0 \frac{\mathrm d\varphi}{\mathrm dr} \right),
\end{equation}
where $\theta_0$ is the inclination angle between the wormhole rotation axis and the observer's line of sight. Since our analysis is restricted to the equatorial plane, we take $\theta_0=\pi/2$. Using the equatorial geodesic equations \eqref{tdotGeo}--\eqref{rdotGeo} in the asymptotic region, one then obtains the simple relation
\begin{equation}\label{alphaXiRelation}
\alpha_\pm=-\xi_\mp.
\end{equation}
The equatorial shadow slice is therefore the line segment $\mathcal S_{\rm eq}=\{(\alpha,\beta):\ \beta=0,\ \alpha\in[\alpha_-,\alpha_+]\}$, with the order of the endpoints reversed if the opposite orientation convention is adopted. For later tabulation, we define only one-dimensional slice diagnostics, namely
\begin{equation}\label{equatorialSliceDiagnostics}
\alpha_{\rm c}=\frac{\alpha_++\alpha_-}{2},\qquad
R_{\rm eq}=\frac{|\alpha_+-\alpha_-|}{2}.
\end{equation}
Neither $\alpha_{\rm c}$ nor $R_{\rm eq}$ should be interpreted as the centre or radius of a full two-dimensional shadow. Here $\alpha_+$ and $\alpha_-$ denote the right and left endpoints, respectively, of the horizontal equatorial slice on the observer's sky. The sign convention is chosen to match the standard observer-sky convention,	in which photon trajectories are traced backwards from a distant observer toward the compact object. The one-dimensional set $\mathcal S_{\rm eq}$ observed at infinity is therefore obtained by mapping the critical impact parameters $\xi_\pm$ to the celestial coordinates $\alpha_\pm$ according to Eq.~\eqref{alphaXiRelation}. This mapping does not determine the boundary for $\beta\neq 0$.

\begin{table}[H]
\centering
\caption{One-dimensional equatorial-slice observables on the illustrative parameter line $\hat a=\hat J\equiv j$, with $x=r/r_0$. The exact family permits independent $\hat a$ and $\hat J$. The restriction is made here only to display a one-dimensional sequence. The quantities $x_{\rm th}$ and $x_{\rm ph}$ are the dimensionless coordinate radii of the throat representative and photon orbit, while $\alpha_\pm/r_0$, $\alpha_{\rm c}/r_0$, and $R_{\rm eq}/r_0$ are the two celestial endpoints, midpoint, and half-width defined in Eq.~\eqref{equatorialSliceDiagnostics}. They are not full-shadow observables.}
\label{tab:shadow-observables}
\begin{tabular}{cccccccc}
\toprule
$j$
& $x_{\rm th}$
& $x_{\rm ph}$
& $\alpha_-/r_0$
& $\alpha_+/r_0$
& $\alpha_{\rm c}/r_0$
& $R_{\rm eq}/r_0$\\
\midrule
$0.005$ & 0.999987  & 1.00043  & -0.976980 & 1.02366 &  0.023340 & 1.00032  \\
$0.3$   & 0.953939 & 1.52040 & -0.405794 & 2.19552 & 0.894863 & 1.30066 \\
$0.6$   & 0.800000 & 1.98348 & -0.231123 & 3.00292 & 1.38590 & 1.61702 \\
$0.9$   & 0.435890 & 2.34432 & -0.119156 & 3.65464 & 1.76774 & 1.88690 \\
\bottomrule
\end{tabular}
\end{table}

\begin{figure}[H]
  \begin{subfigure}[b]{0.45\textwidth}
    \centering
    \includegraphics[width=\textwidth]{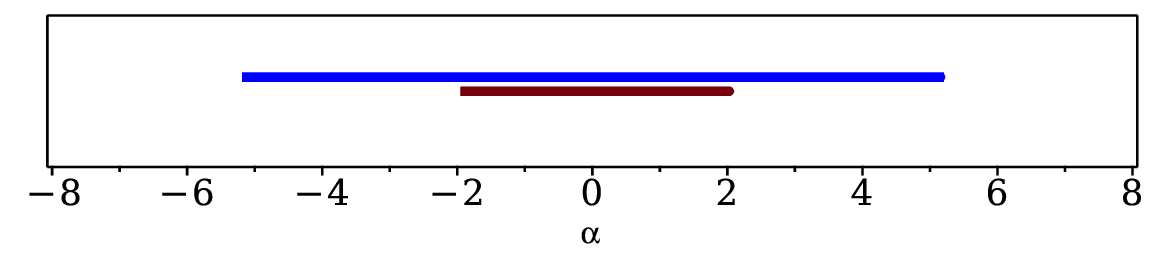}
    \caption{$\hat a=\hat J=j=0.005$}
  \end{subfigure}
  \hfill
  \begin{subfigure}[b]{0.45\textwidth}
    \centering
    \includegraphics[width=\textwidth]{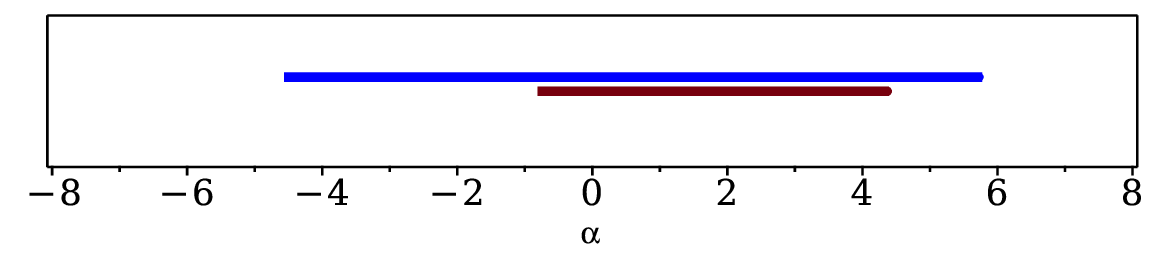}
    \caption{$\hat a=\hat J=j=0.3$}
  \end{subfigure}
  \hfill

  \vspace{1em}

  \begin{subfigure}[b]{0.45\textwidth}
    \centering
    \includegraphics[width=\textwidth]{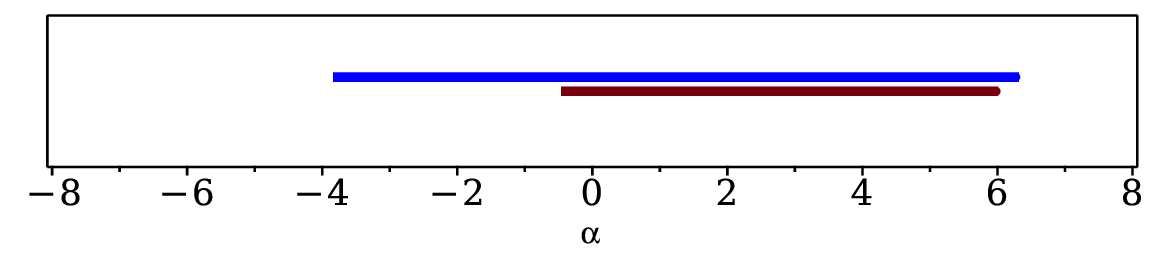}
    \caption{$\hat a=\hat J=j=0.6$}
  \end{subfigure}
  \hfill
  \begin{subfigure}[b]{0.45\textwidth}
    \centering
    \includegraphics[width=\textwidth]{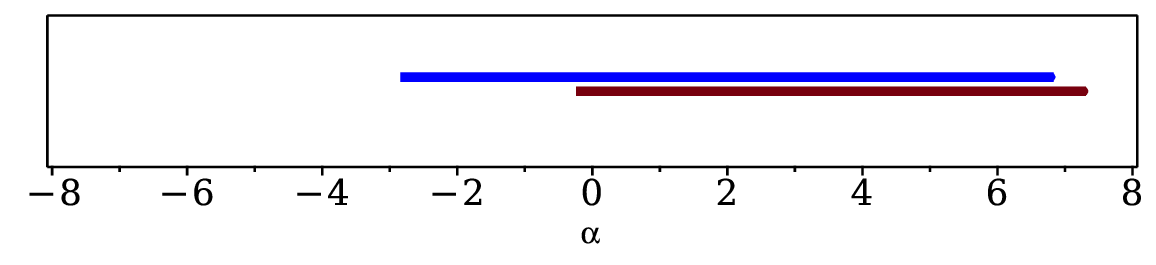}
    \caption{$\hat a=\hat J=j=0.9$}
  \end{subfigure}

  \caption{One-dimensional equatorial capture intervals $\mathcal S_{\rm eq}$ of the wormhole (solid red line) and the corresponding horizontal slices of the Kerr shadow (solid blue line). For the wormhole the displayed sequence is the line $\hat a=\hat J\equiv j$, for Kerr, $j=J_{\rm ADM}/M^2$. We set $M=1$ and $r_0=2M$, matching the Schwarzschild horizon radius in the static comparison. The celestial coordinates are in units of $M$. These line segments are not complete two-dimensional shadow contours.}
  \label{fig:shadow-match-horizon}
\end{figure}

\begin{figure}[H]
  \begin{subfigure}[b]{0.45\textwidth}
    \centering
    \includegraphics[width=\textwidth]{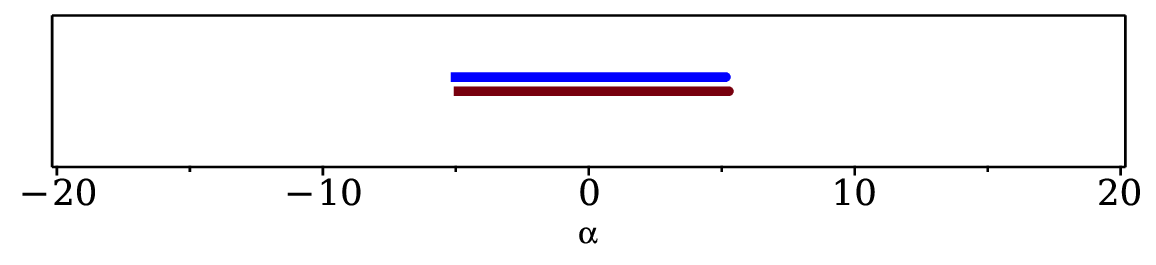}
    \caption{$\hat a=\hat J=j=0.005$}
  \end{subfigure}
  \hfill
  \begin{subfigure}[b]{0.45\textwidth}
    \centering
    \includegraphics[width=\textwidth]{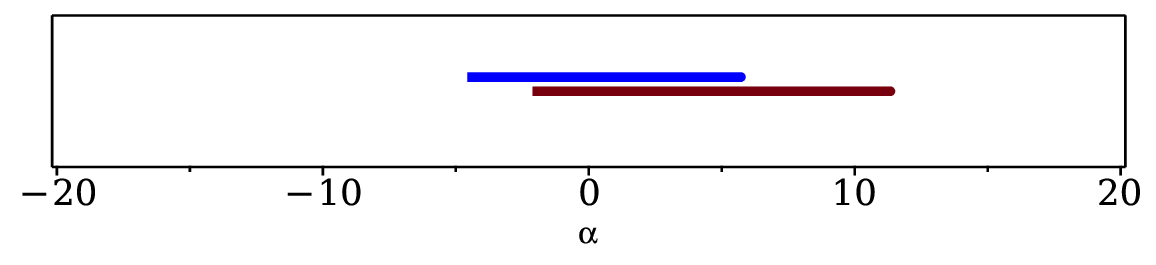}
    \caption{$\hat a=\hat J=j=0.3$}
  \end{subfigure}
  \hfill

  \vspace{1em}

  \begin{subfigure}[b]{0.45\textwidth}
    \centering
    \includegraphics[width=\textwidth]{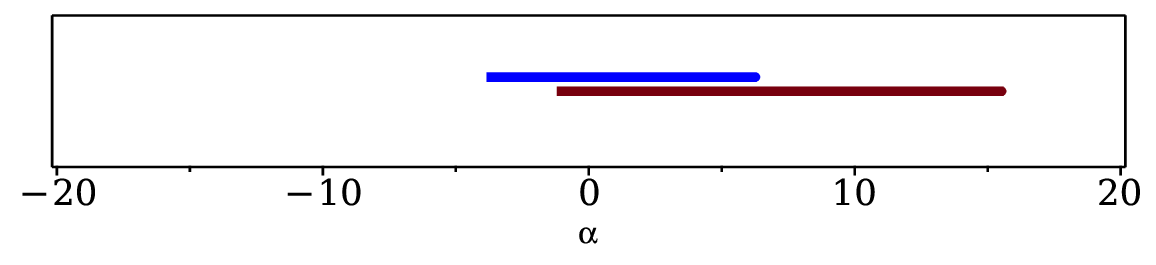}
    \caption{$\hat a=\hat J=j=0.6$}
  \end{subfigure}
  \hfill
  \begin{subfigure}[b]{0.45\textwidth}
    \centering
    \includegraphics[width=\textwidth]{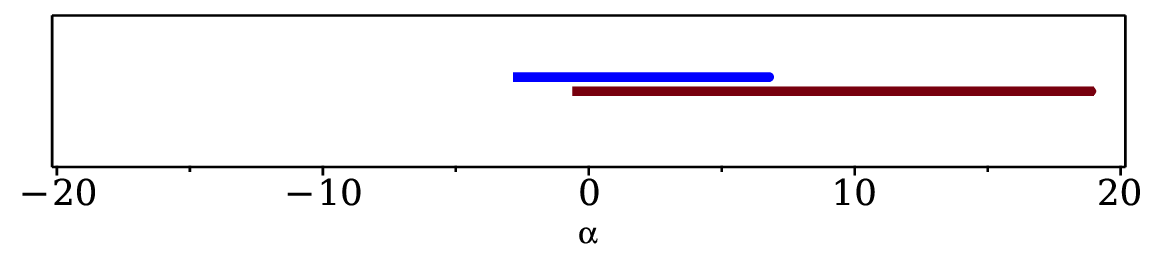}
    \caption{$\hat a=\hat J=j=0.9$}
  \end{subfigure}

  \caption{One-dimensional equatorial capture intervals $\mathcal S_{\rm eq}$ of the wormhole (solid red line) and the corresponding horizontal slices of the Kerr shadow (solid blue line). For the wormhole the displayed sequence is the line $\hat a=\hat J\equiv j$; for Kerr, $j=J_{\rm ADM}/M^2$. We set $M=1$ and $r_0=\sqrt{27}$ so that the two static capture intervals coincide. The celestial coordinates are in units of $M$. These line segments are not complete two-dimensional shadow contours.}
  \label{fig:shadow-match-static}
\end{figure}

\section{Conclusions and Outlook}

We have constructed an exact Kerr-like rotating extension of the zero-redshift Morris--Thorne wormhole within the assumptions $N=1$, $\mathcal K=\mathcal K(r)$, and $\omega=\omega(r)$. The field equations fix $\mathcal K=r^2+a^2$ and the radial form of $\omega$, while asymptotic flatness fixes its amplitude in terms of the independently measured ADM angular momentum $J_{\rm ADM}$. Because $M_{\rm ADM}=0$, neither the Kerr relation nor any equation derived here identifies $a$ with $J_{\rm ADM}$. The exact solution is a family $(r_0,a,J_{\rm ADM})$. In this interpretation, $a$ controls the oblate spatial structure and $J_{\rm ADM}$ controls the magnitude and orientation of frame dragging. The relation $J_{\rm ADM}=ar_0$ is only the one-parameter path used for the displayed numerical sequences. The principal geometrical results are most naturally stated without reference to a radial coordinate. In the canonical stationary foliation, the throat is the unique member $S_0$ of the closed family $S_\ell$ with vanishing mean curvature, positive area second variation, both future null expansions equal to zero, and positive normal flare-out derivative. Its equatorial proper circumference is $2\pi r_0$. The equations $\ell=0$, $\Delta=0$, and $r_{\rm th}=\sqrt{r_0^2-a^2}$ merely represent this surface in the chosen charts. It is scalar-polynomially regular for $|a|<r_0$ and finite $J_{\rm ADM}$. The signed coordinate $\ell$ yields a smooth two-ended extension. Moreover, $g^{tt}=-1$ makes $t$ a global temporal function, and the throat world tube is timelike rather than null, signaling that causal curves can cross it. None of these statements uses $J_{\rm ADM}=ar_0$. The effects that involve frame dragging depend on both dimensionless parameters $\hat a=a/r_0$ and $\hat J=J_{\rm ADM}/r_0^2$. The ergoregion onset is the curve \eqref{criticalErgoCurve}. The previously quoted value $|j_{\rm c}|\simeq0.2087018708$ is its unique intersection with $\hat J=\hat a\equiv j$. The two endpoints of the equatorial capture interval are also two-parameter quantities. In particular, the exterior circular orbit satisfies $r_{\rm ph}^2(r_{\rm ph}^2+a^2-r_0^2)=36J_{\rm ADM}^2$, and the marginal throat endpoint is given by Eq.~\eqref{xiLeftEdge}. Table~\ref{tab:shadow-observables} and Figs.~\ref{fig:shadow-match-horizon} and \ref{fig:shadow-match-static} specialise these general expressions to the illustrative one-parameter line. They display $\mathcal S_{\rm eq}\subset\{\beta=0\}$ only and do not determine a complete two-dimensional shadow contour. Relative to the existing literature, the present paper should therefore be read as a complementary exact construction rather than as the first rotating wormhole. The Teo ansatz, rotating phantom and electromagnetic wormholes, Kerr-like black-bounce metrics, minimal-surface throat criteria, and photon-separatrix methods are established ingredients. Within the restricted ansatz used here, the new results are the field-equation determination of $\mathcal K=r^2+a^2$, the elliptic-integral frame-dragging profile for $b=r_0^2/r$, the independent oblateness and ADM-angular-momentum parameters, the quasi-local characterisation of the oblate throat, the two-parameter ergoregion threshold, and the analytic equatorial capture endpoints. Compared specifically with our earlier exact spinning Morris--Thorne solution \cite{Batic2026EPJC}, the present geometry replaces the spherical throat and elementary dragging profile by an oblate minimal throat and elliptic dragging profile. Conversely, the loss of separability means that the present paper does not reproduce the full-shadow or multipole analysis of that earlier work. The supporting tensor is a conserved anisotropic stress tensor reconstructed from the geometry. The equatorial radial NEC and WEC violations persist in the unrestricted family and therefore do not arise from the phenomenological identification of the oblateness and angular-momentum parameters. At the same time, the reconstruction does not supply a microscopic matter action or a realistic equation of state. We consequently interpret the source as effective and phenomenological. Establishing a physical completion would require an explicit matter sector whose fields reproduce the tensor while satisfying their own equations of motion, together with a perturbative stability analysis.\\
Natural extensions include exploring the full $(\hat a,\hat J)$ parameter plane numerically, identifying whether a specific exotic-field or effective-gravity model selects a preferred curve $J_{\rm ADM}(a)$, studying global energy-condition domains and integrated exotic-matter measures, and analysing tidal constraints, geodesic completeness, and linear stability. Off-equatorial null dynamics will require numerical ray tracing because the Hamilton--Jacobi equation is not generally separable. Only such a calculation can produce the full two-dimensional shadow and test whether its off-equatorial boundary remains connected. These steps are necessary before the solution can be assigned an astrophysically realistic matter interpretation.

\appendix
\section{Komar angular momentum}\label{app:komar}
We verify that the parameter $J_{\rm ADM}$, which appears in the asymptotic normalisation of the frame-dragging function, coincides with the Komar angular momentum of the spacetime. This provides an independent consistency check of its interpretation as the total angular momentum measured at spatial infinity. For a stationary and axisymmetric spacetime, let $\eta^\alpha=\left(\partial_\varphi\right)^\alpha$ denote the axial Killing vector field. The Komar angular momentum associated with $\eta^\alpha$ is \cite{Wald1986}
\begin{equation}\label{KomarDefinitionNoHodge}
J_{\rm K}=-\frac{1}{16\pi}\lim_{r\to\infty}\int_{\mathcal{S}_r}\nabla^\alpha \eta^\beta\,dS_{\alpha\beta},
\end{equation}
where $\mathcal{S}_r$ is a two-sphere of constant $t$ and constant $r$ in the asymptotic region, and $dS_{\alpha\beta}$ is its oriented surface element. For the rotating Morris--Thorne wormhole, the metric component encoding the asymptotic angular momentum is
\begin{equation}\label{gtphiKomar}
g_{t\varphi}=-(1-\chi^2)\mathcal{K}(r)\omega(r),\qquad
\mathcal{K}(r)=r^2+a^2.
\end{equation}
The frame-dragging function is normalised at large radius according to
\begin{equation}\label{omegaAsymptoticKomar}
\omega(r)=\frac{2J_{\rm ADM}}{r^3}+\mathcal{O}\left(r^{-5}\right),\qquad r\to\infty .
\end{equation}
Therefore,
\begin{equation}\label{gtphiAsymptoticKomar}
g_{t\varphi}=-\frac{2J_{\rm ADM}}{r}(1-\chi^2)+\mathcal{O}\left(r^{-3}\right).
\end{equation}
Returning to the standard polar angle $\chi=\cos\theta$, this becomes
\begin{equation}\label{gtphiAsymptoticTheta}
g_{t\varphi}=-\frac{2J_{\rm ADM}}{r}\sin^2\theta+\mathcal{O}\left(r^{-3}\right).
\end{equation}
On $\mathcal{S}_r$, we choose the oriented surface element
\begin{equation}\label{SurfaceElementKomar}
dS_{\alpha\beta}=2n_{[\alpha}\sigma_{\beta]}\,dA,
\end{equation}
where $n^\alpha$ is the future-directed unit normal to the $t=$const.-hypersurface, $\sigma^\alpha$ is the outward-pointing unit normal to the $r=$const.-surface within that hypersurface, and $dA=r^2\sin\theta\,d\theta d\varphi+\mathcal{O}(1)$ is the induced area element at spatial infinity. As $r\to\infty$, $n_\alpha dx^\alpha=-dt+\mathcal{O}(r^{-1})$, and $\sigma_\alpha dx^\alpha=dr+\mathcal{O}(r^{-1})$. Hence
\begin{equation}\label{SurfaceComponentsKomar}
dS_{tr}=-dA,\qquad dS_{rt}=dA.
\end{equation}
It follows that the Komar integrand receives contributions from both antisymmetric components
\begin{equation}\label{KomarContraction}
\nabla^\alpha\eta^\beta dS_{\alpha\beta}=\left(\nabla^t\eta^r dS_{tr}+\nabla^r\eta^t dS_{rt}\right).
\end{equation}
We now compute the leading asymptotic value of this contraction. Since \(\eta^\alpha = \delta^\alpha_\varphi\), one has $\nabla^\alpha \eta^\beta=g^{\alpha\mu}\Gamma^\beta{}_{\mu\varphi}$. The metric components needed at spatial infinity are \eqref{gtphiAsymptoticTheta} and $g_{\varphi\varphi}=r^2\sin^2\theta+\mathcal{O}(1)$, and therefore
\begin{equation}
g^{t\varphi}=-\frac{2J_{\rm ADM}}{r^3}+\mathcal{O}(r^{-5}),\qquad
g^{tt}=-1,\qquad
g^{rr}=1+\mathcal{O}(r^{-1}).
\end{equation}
Using stationarity and axisymmetry, the relevant Christoffel symbol is
\begin{equation}
\Gamma^t{}_{r\varphi}=\frac{1}{2} g^{tt}\partial_r g_{t\varphi}+\frac{1}{2} g^{t\varphi}\partial_r g_{\varphi\varphi}+\mathcal{O}(r^{-3}).
\end{equation}
Since
\begin{equation}
\partial_r g_{t\varphi}=\frac{2J_{\rm ADM}}{r^2}\sin^2\theta+\mathcal{O}(r^{-4}),\qquad
\partial_r g_{\varphi\varphi}=2r\sin^2\theta+\mathcal{O}(r^{-1}),
\end{equation}
we obtain
\begin{equation}\label{GradretaKomar}
\nabla^r\eta^t=\Gamma^t{}_{r\varphi}=-\frac{3J_{\rm ADM}}{r^2}\sin^2\theta+\mathcal{O}(r^{-3}).
\end{equation}
Similarly, $\nabla^t\eta^r=g^{tt}\Gamma^r{}_{t\varphi}+g^{t\varphi}\Gamma^r{}_{\varphi\varphi}$. At leading order,
\begin{equation}
\Gamma^r{}_{t\varphi}=-\frac{1}{2} g^{rr}\partial_r g_{t\varphi}=
-\frac{J_{\rm ADM}}{r^2}\sin^2\theta+\mathcal{O}(r^{-3}),
\end{equation}
while
\begin{equation}
\Gamma^r{}_{\varphi\varphi}=-\frac{1}{2} g^{rr}\partial_r g_{\varphi\varphi}=-r\sin^2\theta+\mathcal{O}(r^{-1}).
\end{equation}
Therefore,
\begin{equation}\label{GradtetaKomar}
\nabla^t\eta^r=-\left(-\frac{J_{\rm ADM}}{r^2}\sin^2\theta\right)+\left(-\frac{2J_{\rm ADM}}{r^3}\right)\left(-r\sin^2\theta\right)+\mathcal O(r^{-3})
=\frac{3J_{\rm ADM}}{r^2}\sin^2\theta+\mathcal O(r^{-3}).
\end{equation}
Equations \eqref{SurfaceComponentsKomar}--\eqref{GradtetaKomar} then imply
\begin{equation}\label{KomarIntegrandFinal}
\nabla^\alpha\eta^\beta dS_{\alpha\beta}=\left(\nabla^t\eta^r dS_{tr}+\nabla^r\eta^t dS_{rt}\right)
=-\frac{6J_{\rm ADM}}{r^2}\sin^2\theta\,dA+\mathcal{O}(r^{-1})\,d\theta d\varphi .
\end{equation}
Using $dA=r^2\sin\theta\,d\theta d\varphi+\mathcal{O}(1)$ together with \eqref{KomarDefinitionNoHodge}, the Komar angular momentum becomes
\begin{equation}\label{KomarEqualsJADM}
J_{\rm K}=\frac{6J_{\rm ADM}}{16\pi}\int_0^{2\pi}\int_0^\pi
    \sin^3\theta\,d\theta d\varphi=J_{\rm ADM}.
\end{equation}
Thus the parameter $J_{\rm ADM}$ appearing in the large-$r$ fall-off of $g_{t\varphi}$ is precisely the Komar angular momentum of the rotating wormhole spacetime. This confirms that $J_{\rm ADM}$ is the physical angular momentum measured at spatial infinity.

\section{Scalar-polynomial regularity at the throat}
\label{app:regularity}

Curvature invariants provide a coordinate-independent diagnostic of local regularity and are particularly useful in wormhole geometries, where the radial coordinate used in the exterior patch often becomes singular at the throat \cite{Visser1996, Lobo2008CQGR}.  We now show that the rotating metric \eqref{exactrotmet} is scalar-polynomial regular at the throat for $0\leq |a|<r_0$. The endpoint $|a|=r_0$ is not included in this regular family. To this end, it is convenient to use the polar angle $\vartheta$, defined by $\chi=\cos\vartheta$, and to introduce
\begin{equation}
A(r)=r^2+a^2,\qquad
\Sigma=r^2+a^2\cos^2\vartheta,\qquad
\Delta=r^2+a^2-r_0^2.
\end{equation}
For $b(r)=r_0^2/r$, the line element can be written in the orthonormal coframe
\begin{equation}
e^0 = dt,\quad
e^1 = \sqrt{\frac{\Sigma}{\Delta}}\,dr,\quad
e^2 = \sqrt{\Sigma}\,d\vartheta,\quad
e^3 = \sqrt{A}\sin\vartheta\left(d\varphi-\omega dt\right),
\end{equation}
so that $ds^2=-(e^0)^2+(e^1)^2+(e^2)^2+(e^3)^2$.  The only potentially dangerous term is the radial derivative of the frame-dragging function. From \eqref{omegaexact}, or directly from the quadrature representation, one has
\begin{equation}\label{omegaPrimeRegApp}
  \frac{d\omega}{dr}=-\frac{6J_{\rm ADM}}{A^{3/2}\sqrt{\Delta}}.
\end{equation}
Although $d\omega/dr$ diverges as $\Delta^{-1/2}$ at the throat, this divergence is precisely cancelled in the orthonormal frame. Indeed, the combination entering the connection is
\begin{equation}\label{VdefApp}
\mathcal V(r,\vartheta):=\sqrt{A}\sin\vartheta\sqrt{\frac{\Delta}{\Sigma}}\,\frac{d\omega}{dr}=-\frac{6J_{\rm ADM}\sin\vartheta}{A\sqrt{\Sigma}},
\end{equation}
which is finite provided $A>0$ and $\Sigma>0$. Define the following auxiliary functions
\begin{equation}\label{BCEFdefsApp}
B=\frac{r\sqrt{\Delta}}{\Sigma^{3/2}},\quad
C=-\frac{a^2\sin\vartheta\cos\vartheta}{\Sigma^{3/2}},\quad
E=\frac{r\sqrt{\Delta}}{A\sqrt{\Sigma}},\quad
F=\frac{\cot\vartheta}{\sqrt{\Sigma}},
\end{equation}
and denote frame derivatives by
\begin{equation}\label{frameDerivativesApp}
X_1:=e_1(X)=\sqrt{\frac{\Delta}{\Sigma}}\,\partial_r X,
\qquad
X_2:=e_2(X)=\frac{1}{\sqrt{\Sigma}}\,\partial_\vartheta X .
\end{equation}
The nonvanishing spin-connection one-forms $\varpi_{ab}=-\varpi_{ba}$ are
\begin{equation}\label{connectionFormsApp}
\varpi_{01}=-\frac{\mathcal V}{2}\,e^3,\quad
\varpi_{03}=-\frac{\mathcal V}{2}\,e^1,\quad
\varpi_{12}=C\,e^1-B\,e^2,\quad
\varpi_{13}=\frac{\mathcal V}{2}\,e^0-E\,e^3,\quad
\varpi_{23}=-F\,e^3.
\end{equation}
The curvature two-forms
$\mathcal R_{ab}=d\varpi_{ab}+\varpi_a{}^c\wedge\varpi_{cb}$ are then
\begin{subequations}
\begin{align}
\mathcal R_{01}
={}&
-\frac{3}{4}\mathcal V^2\,e^0\wedge e^1
-\left(E\mathcal V+\frac{\mathcal V_1}{2}\right)e^1\wedge e^3
-\frac{1}{2}\left(F\mathcal V+\mathcal V_2\right)e^2\wedge e^3,
\\
\mathcal R_{02}
={}&
\frac{\mathcal V}{2}(C-F)\,e^1\wedge e^3
-\frac{B\mathcal V}{2}\,e^2\wedge e^3,
\\
\mathcal R_{03}
={}&
\frac{\mathcal V^2}{4}\,e^0\wedge e^3
+\frac{1}{2}\left(C\mathcal V+\mathcal V_2\right)e^1\wedge e^2,
\\
\mathcal R_{12}
={}&
-\left(B^2+B_1+C^2+C_2\right)e^1\wedge e^2
+\frac{F\mathcal V}{2}\,e^0\wedge e^3,
\\
\mathcal R_{13}
={}&
-\left(E\mathcal V+\frac{\mathcal V_1}{2}\right)e^0\wedge e^1
-\frac{\mathcal V_2}{2}\,e^0\wedge e^2
-\left(CF+E^2+E_1+\frac{\mathcal V^2}{4}\right)e^1\wedge e^3
\\
&\quad
+\left(BF-EF-E_2\right)e^2\wedge e^3,
\\
\mathcal R_{23}
={}&
\frac{\mathcal V}{2}(C-2F)\,e^0\wedge e^1
-\frac{B\mathcal V}{2}\,e^0\wedge e^2
+\left(CE-EF-F_1\right)e^1\wedge e^3
\\
&\quad
-\left(BE+F^2+F_2\right)e^2\wedge e^3 .
\end{align}
\label{curvatureFormsApp}
\end{subequations}
Equations \eqref{curvatureFormsApp} show explicitly that the curvature contains no uncancelled negative powers of $\Delta$. The apparent divergence of $\omega'(r)$ in \eqref{omegaPrimeRegApp} has disappeared from all orthonormal curvature components. The geometrically defined throat $S_0$ is represented in the one-sided $r$-chart by $r=r_{\rm th}=\sqrt{r_0^2-a^2}$. At the throat,
\begin{equation}
\Sigma_{\rm th}=r_0^2-a^2\sin^2\vartheta
=r_0^2\left(1-\hat a^2q\right),\qquad
\hat a=\frac{a}{r_0},\qquad q=\sin^2\vartheta.
\end{equation}
For $|\hat a|<1$ and $0\leq q\leq1$, one has $1-\hat a^2q\geq1-\hat a^2>0$. Thus, $\Sigma_{\rm th}$ is strictly positive throughout the throat surface for every finite $J_{\rm ADM}$. In particular, all coefficients in \eqref{curvatureFormsApp} have finite one-sided limits at $r=r_{\rm th}$. For example,
\begin{equation}
B\big|_{\rm th}=E\big|_{\rm th}=0,\qquad
B_1\big|_{\rm th}=\frac{r_{\rm th}^2}{\Sigma_{\rm th}^2},\qquad
E_1\big|_{\rm th}=\frac{r_{\rm th}^2}{r_0^2\Sigma_{\rm th}},
\end{equation}
while $\mathcal V$, $\mathcal V_2$, $C$, $C_2$, $F$, and $F_2$ are finite away from the usual polar coordinate degeneracy of the axial frame. The latter is the standard coordinate singularity of spherical polar coordinates. Notice that the scalar contractions below have finite limits also at $q=0$. Since all Riemann-tensor components in a regular orthonormal frame admit finite limits at the throat, every scalar polynomial curvature invariant constructed from contractions of the Riemann tensor is finite there. The following three invariants are displayed as explicit checks:
\begin{equation}
I_1=R,\qquad
I_2=R_{\mu\nu}R^{\mu\nu},\qquad
I_3=R_{\mu\nu\rho\sigma}R^{\mu\nu\rho\sigma}.
\end{equation}
The regularity argument through Eq.~\eqref{curvatureFormsApp} is general and does not use $J_{\rm ADM}=ar_0$. To display compact explicit contractions, we now specialise only this check to the illustrative slice $\hat J=\hat a\equiv j$. On that slice, direct contraction at the throat gives
\begin{subequations}\label{throatInvariantsApp}
\begin{align}
I_1\big|_{\rm th}
&=
\frac{2P_1(j,q)}
{r_0^2\left(1-j^2q\right)^3},
\\
I_2\big|_{\rm th}
&=
\frac{2P_2(j,q)}
{r_0^4\left(1-j^2q\right)^6},
\\
I_3\big|_{\rm th}
&=
\frac{4P_3(j,q)}
{r_0^4\left(1-j^2q\right)^6},
\end{align}
\end{subequations}
where
\begin{align}
P_1(j,q)
={}&
9j^6q^3+j^6q^2-18j^4q^2-2j^4q
+8j^2q+3j^2-1,
\\
P_2(j,q)
={}&
486j^{12}q^6+j^{12}q^4
-1944j^{10}q^5-18j^{10}q^4-5j^{10}q^3
\nonumber\\
&+2916j^8q^4+90j^8q^3+12j^8q^2
-1980j^6q^3-130j^6q^2-13j^6q
\nonumber\\
&+560j^4q^2+58j^4q+9j^4
-36j^2q-8j^2+2,
\\
P_3(j,q)
={}&
891j^{12}q^6-18j^{12}q^5+j^{12}q^4
-3528j^{10}q^5+54j^{10}q^4-6j^{10}q^3
\nonumber\\
&+5148j^8q^4+20j^8q^3+14j^8q^2
-3150j^6q^3-244j^6q^2-14j^6q
\nonumber\\
&+516j^4q^2+298j^4q+9j^4
+124j^2q-118j^2+3 .
\end{align}
Since the denominators in \eqref{throatInvariantsApp} are strictly positive for $0\leq |j|<1$, these invariants are finite everywhere on the throat. In the joint static limit along this slice, $j\to0$, one recovers
\begin{equation}
I_1\big|_{\rm th}=-\frac{2}{r_0^2},
\qquad
I_2\big|_{\rm th}=\frac{4}{r_0^4},
\qquad
I_3\big|_{\rm th}=\frac{12}{r_0^4},
\end{equation}
as expected for the zero-redshift Morris--Thorne geometry. The limiting value $|j|=1$ is singular. On the equatorial section $q=1$, \eqref{throatInvariantsApp} reduces to
\begin{subequations}
\begin{align}
I_1\big|_{\rm th,eq}
&=
-\frac{2(10j^4-10j^2+1)}
{r_0^2(1-j^2)^2},
\\
I_2\big|_{\rm th,eq}
&=
\frac{2(487j^8-993j^6+545j^4-40j^2+2)}
{r_0^4(1-j^2)^4},
\\
I_3\big|_{\rm th,eq}
&=
\frac{4(874j^8-1732j^6+844j^4+12j^2+3)}
{r_0^4(1-j^2)^4}.
\end{align}
\end{subequations}
These expressions diverge as $|j|\to 1$. Therefore, the condition $|a|\leq r_0$ obtained from the coordinate location of the throat should be understood as a reality condition, while the scalar-polynomially regular family satisfies $0\leq |a|<r_0$. Finally, the asymptotic behaviour is regular and agrees with asymptotic flatness
\begin{equation}
I_1=-\frac{2r_0^2}{r^4}+\mathcal O(r^{-6}),
\qquad
I_2=\frac{4r_0^4}{r^8}+\mathcal O(r^{-10}),
\qquad
I_3=\frac{12r_0^4}{r^8}+\mathcal O(r^{-10}) .
\end{equation}
Thus, the spacetime has finite scalar-polynomial curvature invariants at the
throat for $0\leq |a|<r_0$ and approaches flat spacetime at infinity.

\section{Uniqueness of the critical point on the one-parameter slice}\label{app:critical-spin}

In this appendix, we show that the intersection of the general ergoregion curve \eqref{criticalErgoCurve} with the illustrative slice $\hat J=\hat a\equiv j$ is unique. On this slice the critical equation is
\begin{equation}\label{criticalSpinConditionAppendix}
\frac{6}{|j|}\left[K(|j|)-E(|j|)\right]=1.
\end{equation}
We prove that it admits exactly one solution in the interval $0<|j|<1$. Here $K(k)$ and $E(k)$ denote the complete elliptic integrals of the first and second kind, respectively. It is convenient to introduce $k:=|j|$ with $0<k<1$, and to define
\begin{equation}
F(k):=\frac{6}{k}\left[K(k)-E(k)\right].
\end{equation}
The critical equation on the selected slice is $F(k)=1$. We first show that $F(k)$ is strictly increasing on $0<k<1$. Using the standard integral representations
\begin{equation}
K(k)=\int_0^{\pi/2}\frac{d\theta}{\sqrt{1-k^2\sin^2\theta}},\quad
E(k)=\int_0^{\pi/2}\sqrt{1-k^2\sin^2\theta}\,d\theta ,
\end{equation}
we obtain
\begin{equation}
K(k)-E(k)=k^2\int_0^{\pi/2}\frac{\sin^2\theta}{\sqrt{1-k^2\sin^2\theta}}\,d\theta .
\end{equation}
Therefore,
\begin{equation}
\frac{K(k)-E(k)}{k}=k\int_0^{\pi/2}\frac{\sin^2\theta}{\sqrt{1-k^2\sin^2\theta}}\,d\theta .
\end{equation}
Differentiating under the integral sign gives
\begin{equation}
\frac{d}{dk}\left[\frac{K(k)-E(k)}{k}\right]=
\int_0^{\pi/2}\left[\frac{\sin^2\theta}{\sqrt{1-k^2\sin^2\theta}}+\frac{k^2\sin^4\theta}
{(1-k^2\sin^2\theta)^{3/2}}\right]d\theta.
\end{equation}
For $0<k<1$, the denominator is strictly positive on the integration domain, and the integrand is strictly positive for $\theta\in(0,\pi/2]$. Hence
\begin{equation}
    \frac{d}{dk}
    \left[
        \frac{K(k)-E(k)}{k}
    \right]
    >
    0,
    \qquad 0<k<1 .
\end{equation}
It follows that $F'(k)>0$ for $0<k<1$, so the left-hand side of Eq.~\eqref{criticalSpinConditionAppendix} is a strictly increasing function of $k=|j|$. It remains to determine the endpoint behaviour. For \(k\to0\), the complete elliptic integrals admit the expansions \cite{Abramowitz1972}
\begin{equation}
K(k)=\frac{\pi}{2}\left[1+\frac{k^2}{4}+\mathcal{O}(k^4)\right],\quad
E(k)=\frac{\pi}{2}\left[1-\frac{k^2}{4}+\mathcal{O}(k^4)\right].
\end{equation}
Therefore,
\begin{equation}
K(k)-E(k)=\frac{\pi}{4}k^2+\mathcal{O}(k^4),
\end{equation}
and hence
\begin{equation}
F(k)=\frac{3\pi}{2}k+\mathcal{O}(k^3).
\end{equation}
Thus,
\begin{equation}
    \lim_{k\to 0^+}F(k)=0 .
\end{equation}
On the other hand, as $k\to1^{-}$, the complete elliptic integral $K(k)$ diverges logarithmically. This can be seen directly from its integral representation. In the limiting case $k=1$, the integrand becomes $1/\cos{\theta}$, which is non-integrable at the endpoint $\theta=\pi/2$. Equivalently, setting $u=(\pi/2)-\theta$, one has $\cos\theta\sim u$ near the endpoint, and hence the integrand behaves as $1/u$, producing a logarithmic divergence. Thus, $K(k)\longrightarrow +\infty$, as $k\to1^{-}$. By contrast, $E(k)$ remains finite in this limit. Indeed,
\begin{equation}
E(1)=\int_0^{\pi/2}\cos\theta\,d\theta=1.
\end{equation}
Consequently, $K(k)-E(k)\longrightarrow +\infty$, as $k\to 1^{-}$. Since the prefactor $6/k$ tends to the finite positive value $6$, we conclude that
\begin{equation}
\lim_{k\to1^-}F(k)=+\infty .
\end{equation}
We have therefore shown that $F(k)$ is continuous and strictly increasing on $0<k<1$, with
\begin{equation}
\lim_{k\to0^+}F(k)=0,\qquad
\lim_{k\to1^-}F(k)=+\infty.
\end{equation}
By the intermediate value theorem, there exists at least one value $k_{\rm crit}\in(0,1)$ such that $F(k_{\rm crit})=1$. Since $F$ is strictly increasing, this solution is unique. Hence, the critical equation on the selected slice
\begin{equation}
\frac{6}{|j|}\left[K(|j|)-E(|j|)\right]=1
\end{equation}
defines a unique critical value $|j|=j_{\rm crit}$ on the one-parameter slice $\hat J=\hat a$.


\begin{thebibliography}{10}

\bibitem{Morris1988AJP}
M.~S. Morris and K.~S. Thorne, ``Wormholes in spacetime and their use for interstellar travel: A tool for teaching general relativity,'' {\em Am. J. Phys.}, vol.~56, p.~395, 1988.

\bibitem{Ellis1973JMP}
H.~G. Ellis, ``Ether flow through a drainhole - a particle model in general relativity,'' {\em J. Math. Phys.}, vol.~14, p.~104, 1973.

\bibitem{Ellis1974JMP}
H.~G. Ellis, ``Errata: Ether flow through a drainhole: A particle model in general relativity,'' {\em J. Math. Phys.}, vol.~15, p.~520, 1974.

\bibitem{Ellis1979GRG}
H.~G. Ellis, ``The evolving, flowless drain hole: A nongravitating model in general relativity theory,'' {\em Gen. Rel. Grav.}, vol.~10, p.~105, 1979.

\bibitem{Bronnikov1973APP}
K.~A. Bronnikov, ``Scalar-tensor theory and scalar charge,'' {\em Acta Phys. Polon. B}, vol.~4, p.~251, 1973.

\bibitem{Visser1996}
M.~Visser, {\em Lorentzian Wormholes}.
\newblock American Institute of Physics, 1996.

\bibitem{Lobo2008CQGR}
F.~S.~N. Lobo, ``Exotic solutions in general relativity: Traversable wormholes and 'warp drive' spacetimes,'' in {\em Classical and Quantum Gravity Research} (C.~Mikkel and T.~Rasmussen, eds.), pp.~1--78, Hauppauge, NY: Nova Science, 2008.

\bibitem{Bhawal1992PRD}
B.~Bhawal and S.~Kar, ``Lorentzian wormholes in einstein-gauss-bonnet theory,'' {\em Phys. Rev. D}, vol.~46, p.~2464, 1992.

\bibitem{Kanti2011PRL}
P.~Kanti, B.~Kleihaus, and J.~Kunz, ``Wormholes in dilatonic einstein-gauss-bonnet theory,'' {\em Phys. Rev. Lett.}, vol.~107, p.~271101, 2011.

\bibitem{Kanti2012PRD}
P.~Kanti, B.~Kleihaus, and J.~Kunz, ``Stable lorentzian wormholes in dilatonic einstein-gauss-bonnet theory,'' {\em Phys. Rev. D}, vol.~85, p.~044007, 2012.

\bibitem{Antoniou2020PRD}
G.~Antoniou, A.~Bakopoulos, P.~Kanti, B.~Kleihaus, and J.~Kunz, ``Novel einstein–scalar-gauss-bonnet wormholes without exotic matter,'' {\em Phys. Rev. D}, vol.~101, p.~2, 2020.

\bibitem{Gao2017JHEP}
P.~Gao, D.~L. Jafferis, and A.~Wall, ``Traversable wormholes via a double trace deformation,'' {\em JHEP}, vol.~12, p.~151, 2017.

\bibitem{Wall2013CQG}
A.~C. Wall, ``The generalized second law implies a quantum singularity theorem. classical and quantum gravity,'' {\em Class. Quantum Grav.}, vol.~30, p.~165003, 2013.

\bibitem{HochbergVisser1997PRD}
D.~Hochberg and M.~Visser, ``Geometric structure of the generic static traversable wormhole throat,'' {\em Phys. Rev. D}, vol.~56, pp.~4745--4755, 1997.

\bibitem{HochbergVisser1998PRD}
D.~Hochberg and M.~Visser, ``Dynamic wormholes, antitrapped surfaces, and energy conditions,'' {\em Phys. Rev. D}, vol.~58, p.~044021, 1998.

\bibitem{TomikawaIzumiShiromizu2015PRD}
Y.~Tomikawa, K.~Izumi, and T.~Shiromizu, ``New definition of a wormhole throat,'' {\em Phys. Rev. D}, vol.~91, p.~104008, 2015.

\bibitem{Teo1998PRD}
E.~Teo, ``Rotating traversable wormholes,'' {\em Phys. Rev. D}, vol.~58, p.~024014, 1998.

\bibitem{Kuhfittig2003PRD}
P.~K.~F. Kuhfittig, ``Axially symmetric rotating traversable wormholes,'' {\em Phys. Rev. D}, vol.~67, p.~064015, 2003.

\bibitem{Kashargin2008GC}
P.~Kashargin and S.~Sushkov, ``Slowly rotating wormholes: the first order approximation,'' {\em Grav. Cosmol.}, vol.~14, p.~80, 2008.

\bibitem{Kashargin2008PRD}
P.~Kashargin and S.~Sushkov, ``Slowly rotating scalar field wormholes: The second order approximation,'' {\em Phys. Rev. D}, vol.~78, p.~064071, 2008.

\bibitem{Kleihaus2014PRD}
B.~Kleihaus and J.~Kunz, ``Rotating ellis wormholes in four dimensions,'' {\em Phys. Rev. D}, vol.~90, p.~121503(R), 2014.

\bibitem{Hoffmann2018PRD}
C.~Hoffmann, T.~Ioannidou, S.~Kahlen, B.~Kleihaus, and J.~Kunz, ``Symmetric and asymmetric wormholes immersed in rotating matter,'' {\em Phys. Rev. D}, vol.~97, p.~124019, 2018.

\bibitem{Chew2019PRD}
X.~Y. Chew, V.~Dzhunushaliev, V.~Folomeev, B.~Kleihaus, and J.~Kunz, ``Rotating wormhole solutions with a complex phantom scalar field,'' {\em Phys. Rev. D}, vol.~100, p.~044019, 2019.

\bibitem{Azad2023}
B.~Azad, ``Quasinormal modes of static ellis-bronnikov wormholes,'' in {\em Gravity, Cosmology, and Astrophysics: A Journey of Exploration and Discovery with Female Pioneers} (B.~Hartmann and J.~Kunz, eds.), vol.~1022 of {\em Lecture Notes in Physics}, Berlin, Heidelberg: Springer Verlag, 2023.

\bibitem{Azad2024PLB}
B.~Azad, J.~L. Bl$\acute{\mbox{a}}$zquez-Salcedo, F.~S. Khoo, and J.~Kunz, ``Are slowly rotating ellis-bronnikov wormholes stable?,'' {\em Phys. Lett. B}, vol.~848, p.~138349, 2024.

\bibitem{Cisterna2023PRD}
A.~Cisterna, K.~M\"{u}ller, K.~Pallikaris, and A.~Viganò, ``Exact rotating wormholes via ehlers transformations,'' {\em Phys. Rev. D}, vol.~108, p.~024066, 2023.

\bibitem{Clement2023PLB}
G.~Clément and D.~Gal'tsov, ``Rotating traversable wormholes in einstein-maxwell theory,'' {\em Phys. Lett. B}, vol.~838, p.~137677, 2023.

\bibitem{Tanghpati2024NPB}
T.~Tangphati, B.~Chaihao, D.~Samart, P.~Channuie, and D.~Momeni, ``Rotating traversable wormhole geometries in the presence of three-form fields,'' {\em Nucl. Phys. B}, vol.~999, p.~116446, 2024.

\bibitem{Batic2026EPJC}
D.~Batic, D.~Dutykh, and M.~E. Sukaiti, ``Exact spinning morris--thorne wormhole: causal structure, shadows, and multipole moments,'' {\em Eur. Phys. J. C}, vol.~86, p.~179, 2026.

\bibitem{Batic2026CQGSlow}
D.~Batic, D.~Dutykh, and M.~E. Sukaiti, ``Exact solutions for slowly rotating wormholes in the presence of an anisotropic fluid,'' {\em Class. Quantum Grav.}, vol.~43, p.~095007, 2026.

\bibitem{Bambi2013PRD}
C.~Bambi, ``Can the supermassive objects at the centers of galaxies be traversable wormholes? the first test of strong gravity for mm/sub-mm very long baseline interferometry facilities,'' {\em Phys. Rev. D}, vol.~87, p.~107501, 2013.

\bibitem{Bambi2013PRDa}
C.~Bambi, ``Broad $k\alpha$ iron line from accretion disks around traversable wormholes,'' {\em Phys. Rev. D}, vol.~87, p.~084039, 2013.

\bibitem{Nedkova2013PRD}
P.~G. Nedkova, V.~K. Tinchev, and S.~S. Yazadjiev, ``Shadow of a rotating traversable wormhole,'' {\em Phys. Rev. D}, vol.~88, p.~124019, 2013.

\bibitem{Gyulchev2018EPJC}
G.~Gyulchev, P.~Nedkova, V.~Tinchev, and S.~Yazadjiev, ``On the shadow of rotating traversable wormholes,'' {\em Eur. Phys. J. C}, vol.~78, p.~544, 2018.

\bibitem{Shaikh2018PRD}
R.~Shaikh, ``Shadows of rotating wormholes,'' {\em Phys. Rev. D}, vol.~98, p.~024044, 2018.

\bibitem{Deligianni2021PRD}
E.~Deligianni, J.~Kunz, P.~Nedkova, S.~Yazadjiev, and R.~Zheleva, ``Quasiperiodic oscillations around rotating traversable wormholes,'' {\em Phys. Rev. D}, vol.~104, p.~024048, 2021.

\bibitem{Deligianni2021PRDa}
E.~Deligianni, B.~Kleihaus, J.~Kunz, P.~Nedkova, and S.~Yazadjiev, ``Quasiperiodic oscillations in rotating ellis wormhole spacetimes,'' {\em Phys. Rev. D}, vol.~104, p.~064043, 2021.

\bibitem{Kumar2024PRD}
S.~Kumar, A.~Uniyal, and S.~Chakrabarti, ``Shadow and weak gravitational lensing of rotating traversable wormhole in nonhomogeneous plasma spacetime,'' {\em Phys. Rev. D}, vol.~109, p.~104012, 2024.

\bibitem{ChengXuZhao2026EPJC}
P.~Cheng, R.-F. Xu, and P.~Zhao, ``On the cuspy structure of rotating wormhole shadows,'' {\em Eur. Phys. J. C}, vol.~86, p.~371, 2026.

\bibitem{DzhunushalievFolomeev2026GRG}
V.~Dzhunushaliev and V.~Folomeev, ``Rotating wormholes in {Einstein--Dirac--Maxwell} theory,'' {\em Gen. Relativ. Gravit.}, vol.~58, p.~39, 2026.

\bibitem{Azreg2014EPJC}
M.~Azreg-A\"{i}nou, ``From static to rotating to conformal static solutions: rotating imperfect fluid wormholes with(out) electric or magnetic field,'' {\em Eur. Phys. J. C}, vol.~74, p.~2865, 2014.

\bibitem{Mazza2021JCAP}
J.~Mazza, E.~Franzin, and S.~Liberati, ``A novel family of rotating black hole mimickers,'' {\em J. Cosmol. Astropart. Phys.}, vol.~04, p.~082, 2021.

\bibitem{KarJanaKar2025PRD}
A.~Kar, S.~Jana, and S.~Kar, ``New rotating lorentzian wormhole spacetime,'' {\em Phys. Rev. D}, vol.~111, p.~064010, 2025.

\bibitem{Abramowitz1972}
M.~Abramowitz and I.~A. Stegun, {\em Handbook of Mathematical Functions}.
\newblock Dover: New York, 1972.

\bibitem{Wald1986}
R.~M. Wald, {\em General Relativity}.
\newblock The University of Chicago Press, 1986.

\bibitem{bray1986kerr}
I.~Bray, ``Kerr black hole as a gravitational lens,'' {\em Phys. Rev. D}, vol.~34, no.~2, p.~367, 1986.

\end{thebibliography}

\end{document}